\documentclass{article}
\pdfpagewidth=8.5in
\pdfpageheight=11in

% The file ijcai24.sty is a copy from ijcai22.sty
% The file ijcai22.sty is NOT the same as previous years'
\usepackage{ijcai24}

% Use the postscript times font!
\usepackage{times}
\usepackage{soul}
\usepackage{url}
\usepackage{amsthm,amsmath,amsfonts}
\usepackage[hidelinks]{hyperref}
\usepackage[utf8]{inputenc}
\usepackage[small]{caption}
\usepackage{graphicx}
\usepackage{amsmath}
\usepackage{amsthm}
\usepackage{amssymb}
\usepackage{booktabs}
\usepackage{algorithm}
\usepackage{algorithmic}
\usepackage[switch]{lineno}
\usepackage{xspace,xcolor}
\usepackage{bm}
\usepackage{mathtools}
\usepackage{delimset}
\usepackage{tikz}

\allowdisplaybreaks
\allowdisplaybreaks[3]

\newcommand{\acc}{\ensuremath{{\tt{acc}}}\xspace}

\newcommand{\E}{\bm{\mathbb{E}}}
\newcommand{\Z}{\mathbb{Z}}

\DeclareMathOperator*{\argmin}{arg\,min}

\urlstyle{same}

% the following package is optional:
%\usepackage{latexsym}

% See https://www.overleaf.com/learn/latex/theorems_and_proofs
% for a nice explanation of how to define new theorems, but keep
% in mind that the amsthm package is already included in this
% template and that you must *not* alter the styling.

\newtheorem{theorem}{Theorem}
\newtheorem{lemma}[theorem]{Lemma}
\newtheorem{corollary}[theorem]{Corollary}
\newtheorem{proposition}[theorem]{Proposition}
\newtheorem{definition}[theorem]{Definition}

% Following comment is from ijcai97-submit.tex:
% The preparation of these files was supported by Schlumberger Palo Alto
% Research, AT\&T Bell Laboratories, and Morgan Kaufmann Publishers.
% Shirley Jowell, of Morgan Kaufmann Publishers, and Peter F.
% Patel-Schneider, of AT\&T Bell Laboratories collaborated on their
% preparation.

% These instructions can be modified and used in other conferences as long
% as credit to the authors and supporting agencies is retained, this notice
% is not changed, and further modification or reuse is not restricted.
% Neither Shirley Jowell nor Peter F. Patel-Schneider can be listed as
% contacts for providing assistance without their prior permission.

% To use for other conferences, change references to files and the
% conference appropriate and use other authors, contacts, publishers, and
% organizations.
% Also change the deadline and address for returning papers and the length and
% page charge instructions.
% Put where the files are available in the appropriate places.

% PDF Info Is REQUIRED.

% Please leave this \pdfinfo block untouched both for the submission and
% Camera Ready Copy. Do not include Title and Author information in the pdfinfo section
\pdfinfo{
/TemplateVersion (IJCAI.2024.0)
}

\title{Contract Scheduling with Distributional and Multiple Advice}

\author{
Spyros Angelopoulos$^1$\and
Marcin Bienkowski$^2$\and
Christoph Dürr$^1$\and
Bertrand Simon$^3$\\
\affiliations
$^1$Sorbonne Université, CNRS, LIP6\\
$^2$University of Wroclaw\\
$^3$CNRS, IN2P3 Computing Center\\
\emails
\{spyros.angelopoulos, christoph.durr\}@lip6.fr,
marcin.bienkowski@cs.uni.wroc.pl,
bertrand.simon@cnrs.fr
}

\begin{document}

\maketitle

\begin{abstract}
Contract scheduling is a widely studied framework for designing real-time systems with interruptible capabilities. Previous work has showed that a prediction on the interruption time can help improve the performance of contract-based systems, however it has relied on a single prediction that is provided by a deterministic oracle. In this work, we introduce and study more general and realistic learning-augmented settings in which the prediction is in the form of a probability distribution, or it is given as a set of multiple possible interruption times. For both prediction settings, we design and analyze schedules which perform optimally if the prediction is accurate, while simultaneously guaranteeing the best worst-case performance if the prediction is adversarial. We also provide evidence that the resulting system is robust to prediction errors in the distributional setting. Last, we present an experimental evaluation that confirms the theoretical findings, and illustrates the performance improvements that can be attained in practice.
\end{abstract}

\section{Introduction}
\label{sec:introduction}

A central requirement in the design of real-time and intelligent systems is the provision for anytime capabilities. Many applications, such as motion-planning and medical diagnosis, require systems that are able to output a reasonably efficient solution even if they are interrupted at arbitrary points in time. This motivates the design and evaluation of anytime systems given, as building component, more rudimentary systems that are not interruptible. Questions related to trade-offs between  resources (e.g., computational time) and performance are at the heart of AI applications, and the topic of early influential works on flexible computation, resource-bounded algorithms and time-depending planning~\cite{deliberation:boddy.dean,Horvitz:reasoning,DBLP:journals/ai/ZilbersteinR96}.

We are interested, in particular, in a paradigm introduced by~\cite{RZ.1991.composing}, in which the building component is a {\em contract} algorithm, namely an anytime algorithm that is given the exact amount of allowable computation time as part of its input. A contract algorithm will always output the correct result if it is allowed at least its prescribed computation time (hence its name), otherwise it may very well return a meaningless result if it is queried prior to its promised contract time. Thus, contract algorithms are not interruptible; however~\cite{RZ.1991.composing} proposed a methodology for obtaining interruptible systems via {\em scheduling} consecutive executions of the contract algorithm with increasing computation times. As an example, consider a schedule in which the $i$-th execution of the contract algorithm is allowed time $2^i$, for all $i \in \mathbb{N}$. This yields a system in which, at any time $t$, the contract algorithm has completed at least one execution of time $t/4$. The factor 4 quantifies the performance of the schedule, and is the multiplicative loss due to the repeated executions of the contract component.

More generally, given a contract algorithm $A$, a contract {\em schedule} is defined as an increasing sequence $X=(x_i)_{i \geq 0}$, where $x_i$ is the length of the $i$-th execution of $A$. To evaluate the performance of a schedule $X$, we rely on a worst-case measure known as the {\em acceleration ratio}~\cite{RZ.1991.composing}. Let $\ell(X,T)$ denote the largest contract length completed by time $T$ in $X$, then the acceleration ratio of $X$ is defined formally as
\begin{equation}
\acc(X)=\sup_T \frac{T}{\ell(X,T)}.
%\left\{ \frac{T}{x_j} \ : \ \sum_{i=1}^j x_i \leq T <\sum_{i=1}^{j+1} x_{i} \right \},
\label{eq:acc.ratio}
\end{equation}

It is easy to show that the optimal acceleration ratio is equal to 4, and that it is obtained by the doubling schedule $x_i=2^i$. However, contract scheduling becomes far more challenging in more complex settings that have been studied in the literature. This includes schedules that are run on multiple processors~\cite{BPZF.2002.scheduling}, on multiple instances~\cite{ZilbersteinCC03}, or combinations of these seeings~\cite{steins,aaai06:contracts}; soft interruptions where the query time is not a hard deadline~\cite{soft-contracts}; performance measures beyond the acceleration ratio~\cite{ALO:multiproblem}; adaptive schedules~\cite{DBLP:journals/jcss/AngelopoulosP23}; schedules with end-guarantees on completion time~\cite{DBLP:conf/ijcai/0001J19}; and learning-augmented schedules~\cite{DBLP:journals/jair/AngelopoulosK23}. Furthermore, contract scheduling is an abstraction of resource allocation under uncertainty, hence it has connections to other optimization problems under uncertainty, such as searching for a hidden target in a known environment under the competitive ratio, as shown in~\cite{steins,spyros:ijcai15}. 

\subsection{Contract scheduling with predictions}

The standard formulation of contract scheduling assumes that the scheduler has no prior information on the interruption time, hence the acceleration ratio evaluates the system performance at worst-case interruptions. In practice, however, one expects that the scheduler should be able to benefit from a {\em prediction} on the interruption, that may be available via a ML oracle. This motivated the recent study~\cite{DBLP:journals/jair/AngelopoulosK23} of the problem, in a setting in which an imperfect oracle provides the system with a single, deterministic prediction (advice) $\tau$ concerning the interruption time $T$. \cite{DBLP:journals/jair/AngelopoulosK23} showed that there exists a schedule with acceleration ratio equal to 4 even if the prediction is adversarial, and which also yields a much improved ratio equal to 2 if the prediction is correct (i.e., error-free). Moreover, they showed that this is result is optimal. Using the terminology of {\em learning-augmented} online algorithms~\cite{DBLP:journals/jacm/LykourisV21,NIPS2018_8174}, we say that there exist schedules of {\em consistency} equal to 2, and {\em robustness} equal to 4, whereas no 4-robust schedule can be better than 2-consistent\footnote{\cite{DBLP:journals/jair/AngelopoulosK23} also studied a query-based setting in which the prediction is elicited via responses to binary queries, however, this query model is not relevant to our work.}. We emphasize that the prediction is assumed to be {\em single}, i.e., it consists of a unique predicted interruption time, and deterministic, in that the oracle does not incorporate any stochastic aspects. 

In practice, however, the above model may not adequately capture the information content of the anticipated prediction. For instance, in motion-planning algorithms, the system may operate under a certain probabilistic knowledge of the terrain, hence specific actions may have to be triggered according to a (stochastic) belief about the environment~\cite{DBLP:conf/ijcai/ZilbersteinR93}. For a different example, a medical diagnostic system may have to be queried at multiple anticipated times (e.g., depending on the availability of various facilities and specialists). Note that such settings are not captured by the model of~\cite{DBLP:journals/jair/AngelopoulosK23}, in which the schedule is fine-tuned according to a single prediction, and may have consistency as high as 4 if the prediction includes as few as two possible interruption times.

\subsection{Contribution}
\label{subsec:contribution}

Motivated by the above applications, and the limitations of single/deterministic prediction oracles, we study contract scheduling under more general models that incorporate {\em distributional} and {\em multiple} advice. As in~\cite{DBLP:conf/infocom/ImMXZ23}, which studied the learning-augmented dynamic acknowledgment problem, we aim to {\em simultaneously} optimize the consistency and the robustness of the system. For both settings, we show that the results we obtain are tight.

We begin with the distributional setting, in which the advice oracle provides the scheduler with a distribution on the anticipated prediction. Here, the consistency is evaluated, in expectation, relative to the distributional advice (see Section~\ref{sec:preliminaries} for the formal definition). We show how to construct a collection of $n$ schedules for any given $n \in \mathbb N^*$, such that each schedule in the collection is 4-robust, and the best schedule has consistency at most $4n(2^\frac{1}{n}-1)$. In particular, we show that as $n$ grows, the consistency of the system is arbitrarily close to $4\ln 2 \approx 2.77$, and that the best schedule can be computed in time polynomial in~$n$. We show that this bound is optimal, in that there exists a distributional prediction for which the consistency of any 4-robust schedule is at least $4 \ln 2$. Furthermore, we demonstrate an interesting disconnect between deterministic and distributional predictions. Namely, we prove that, given the distributional prediction that maximizes the consistency, the performance of the optimal schedule deteriorates {\em smoothly} as function of the prediction error, measured by the Earth Mover's Distance, or EMD~\cite{rubner1998metric}. In contrast, no consistency-optimal 4-robust schedule can exhibit smoothness against deterministic predictions.  This disconnect shows that distributional predictions can help mitigate pathological situations, which can be of interest in other learning-augmented optimization problems. 

In the second part of this work, we study the model in which the advice oracle provides the scheduler a set $P$ of $k$ potential interruption times (e.g., provided by $k$ experts). Here, the consistency is measured as the worst-case performance ratio among interruptions in $P$, and we refer to Section~\ref{sec:preliminaries} for the formal definition. We show how to derive a 4-robust schedule of optimal consistency $2^{2-\frac{1}{k}}$ in time $O(k \log k)$. %Furthermore, we prove that the consistency does not exceed $2^{2-\frac{1}{k}}$, and we give an accompanying tight lower bound. 
We conclude with an experimental evaluation of our schedules, in both the distributional and multiple advice settings, that demonstrates the performance improvements that can be attained in practice.
% We also show that the techniques used in this work can be applied to generalizations of contract scheduling such as the problem of searching for a hidden target in an unbounded line under the competitive ratio, which extends results of~\cite{DBLP:conf/innovations/000121} and~\cite{DBLP:conf/pkdd/ImMXZ23} that studied the problem under single, deterministic predictions. 

\subsection{Other related work}

Motivated by the capacity of ML predictions to improve algorithmic performance, the field of learning-augmented algorithms has been growing rapidly in the recent years. We refer to the survey~\cite{DBLP:books/cu/20/MitzenmacherV20} and the online repository~\cite{predictionslist} that lists several works over the last five years. The vast majority of works have focused on single, deterministic predictions. Multi-prediction oracles were first studied in the context of ski rental~\cite{gollapudi2019online}, followed by works on multi-shop ski rental~\cite{wang2020online}, facility location~\cite{almanza2021online}, matching and scheduling~\cite{dinitz2022algorithms}, online covering~\cite{anand2022online} and $k$-server~\cite{antoniadis2023mixing}. Distributional predictions were first studied in~\cite{diakonikolas2021learning} in problems such as ski rental and prophet inequalities. 

Contract scheduling is related to the {\em online bidding} problem~\cite{ChrKen06}, which is used a building block in many 
online algorithms, e.g., for scheduling on related machines, both non-preemptive~\cite{DBLP:journals/jacm/AspnesAFPW97},
and preemptive~\cite{DBLP:journals/scheduling/EbenlendrS09}, $k$-median problem~\cite{DBLP:journals/algorithmica/ChrobakKNY08}, 
or multi-level aggregation~\cite{DBLP:journals/tcs/BienkowskiBBCDF21}.
A reduction between the contract scheduling and online bidding problems for the single prediction setting was shown in~\cite{DBLP:journals/jair/AngelopoulosK23}. Consistency/robustness tradeoffs for online bidding with a single prediction were shown in~\cite{anand2021regression,DBLP:conf/innovations/0001DJKR20,DBLP:conf/pkdd/ImMXZ23}.

\section{Preliminaries}
\label{sec:preliminaries}

A contract schedule is defined as a sequence of the form $X=(x_i)_{i\in \mathbb{N}}$, where $x_i$ is the {\em length} of the $i$-th contract, and recall that the acceleration ratio of $X$ is given by~\eqref{eq:acc.ratio}. In a learning-augmented setting, the acceleration ratio of $X$ is equivalently called the {\em robustness} of $X$, and we say that $X$ is $r$-robust if it has robustness at most $r$. From~\cite{RZ.1991.composing} we thus know that $r\geq 4$, and that for $X=(2^i)_{i \in \mathbb{N}}$, we have that $r(X)=4$.

The {\em consistency} of $X$ is defined according to the specifics of the prediction oracle, hence we make a distinction between the two learning-augmented settings we study. In the distributional setting, the advice consists of a distribution $\mu$ on the anticipated interruption time, and the consistency of a schedule $X$ with advice $\mu$ is defined as
\begin{equation}
c(X,\mu) = \frac{\E_{z \sim \mu}[z]}{\E_{z \sim \mu}[\ell(X,z)]},
\label{eq:cons.distr}
\end{equation}
and recall that the random variable $\ell(X,z)$ denotes the largest contract completed in $X$ by time $z$. We will call $\ell(X,z)$ the {\em profit} of $X$, where $z$ is drawn from distribution $\mu$.

In the multiple-advice setting, the prediction consists of a set $P=\{\tau_1, \ldots ,\tau_k\}$ of $k$ possible interruption times (e.g., provided by $k$ experts). Here, we measure the consistency $c(X,P)$ of a schedule $X$ with prediction set $P$ as its worst-case performance relative to the prediction in $P$, i.e., we define
\begin{equation} 
c(X,P)=\sup_{\tau \in P} \frac{\tau}{\ell(X,\tau)}.
\label{eq:cons.mult}
\end{equation}

Note that without any assumptions, no schedule can have bounded robustness, if the interruption time is allowed to be arbitrarily small. There are two types of assumptions that can be applied to circumvent this technical issue. The first is to assume that the interruption can only occur after the first contract has completed its execution. The second is to assume that the schedule is {\em bi-infinite}, in that it starts with an infinite number of infinitesimally small contracts. For instance, the doubling schedule can be described as $(2^i)_{i\in \mathbb{Z}}$, and the completion time of contract $i\geq 0$ is defined as $\sum_{j=-\infty}^i 2^j=2^{i+1}$. We choose the second assumption since it simplifies the calculations, but we note that the two assumptions can be used interchangeably; see, e.g., the discussion in~\cite{demaine:turn}.

For a given $\lambda \in [0,1)$, define the schedule $X(\lambda) \triangleq (2^{i-\lambda})_{i \in \mathbb{Z}}$. The following proposition shows that it suffices to focus on the set of schedules $\cup_{\lambda \in [0,1)} \{X(\lambda))\}$.

\begin{proposition}[Appendix]
For any $\lambda \in [0,1)$, $X(\lambda)$ is 4-robust. Conversely, every 4-robust schedule must belong in 
the class  $\cup_{\lambda \in [0,1)} \{X(\lambda)\}$.
\label{lem:4-robust}
\end{proposition}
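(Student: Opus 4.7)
For the forward direction, fix $\lambda \in [0,1)$. A direct geometric-series computation shows that in $X(\lambda) = (2^{i-\lambda})_{i \in \mathbb{Z}}$ the completion time of contract $i$ is $C_i = \sum_{j \leq i} 2^{j-\lambda} = 2^{i+1-\lambda}$. For any $T > 0$, letting $i$ be the unique integer with $T \in [C_i, C_{i+1})$, we have $\ell(X(\lambda),T) = 2^{i-\lambda}$, so $T/\ell(X(\lambda),T) < C_{i+1}/2^{i-\lambda} = 4$. Taking the supremum over $T$ yields $\acc(X(\lambda)) = 4$.

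For the converse, let $X = (x_i)_{i \in \mathbb{Z}}$ be an arbitrary $4$-robust schedule. The worst-case ratio on $[C_i, C_{i+1})$ equals $C_{i+1}/x_i$, so being $4$-robust is equivalent to $C_{i+1} \leq 4 x_i$ for every $i \in \mathbb{Z}$. Setting $y_i = x_i/2^i > 0$, this rewrites as
\begin{equation}
\tfrac{1}{2}\, y_{i+1} + \sum_{k \geq 0} 2^{-k-2}\, y_{i-k} \;\leq\; y_i \qquad \text{for every } i \in \mathbb{Z}, \label{eq:y-ineq}
\end{equation}
whose left-hand side is a convex combination (coefficients summing to $1$) of $\{y_{i+1}, y_i, y_{i-1}, \dots\}$. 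The plan is to conclude that $y$ must be constant; then $x_i = c\cdot 2^i$ for some $c > 0$, and writing $c = 2^{-\lambda}$ with $\lambda \in [0,1)$ (absorbing $\lfloor -\log_2 c \rfloor$ into a shift of the index) gives $X = X(\lambda)$.

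To show $y$ is constant, set $m = \inf_i y_i$ and first assume the infimum is attained at some $i^\star$. Applying~\eqref{eq:y-ineq} at $i = i^\star$ bounds a convex combination of numbers, each $\geq m$, by $m$ itself; hence every participating value must equal $m$, i.e., $y_{i^\star+1} = y_{i^\star-k} = m$ for all $k \geq 0$. A straightforward induction on $j > i^\star$---re-applying~\eqref{eq:y-ineq} at $j-1$ with a history already equal to $m$---propagates $y_j = m$ to every $j \in \mathbb{Z}$.

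The main obstacle is the case where the infimum is not attained (in particular $m = 0$). My plan is to eliminate it using the two-sided structure of the schedule: \eqref{eq:y-ineq} already yields the pointwise bound $y_{i+1} \leq 2 y_i$, so $y$ cannot blow up going forward, while the finiteness of $C_i = \sum_{j \leq i} 2^j y_j$ controls how large $y_j$ can be as $j \to -\infty$. Picking a sequence $i_n$ with $y_{i_n} \to m$, the same convex-combination argument applied to~\eqref{eq:y-ineq} at $i_n$ forces $y_{i_n+1}$ and, for each fixed $k$, $y_{i_n-k}$ to converge to $m$ as well; combined with the summability constraint on past tails, this either contradicts $y_j > 0$ outright or reduces to the attained case already handled, concluding that $y$ is constant and hence $X = X(\lambda)$ for some $\lambda \in [0,1)$.
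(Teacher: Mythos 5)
Your forward direction is correct and is essentially the paper's own argument: the paper rewrites robustness as $\sup_i \bigl(\sum_{j\le i} x_j\bigr)/x_{i-1}$ and observes that $X(\lambda)$ is a scaled doubling schedule. For the converse, note that the paper does not give a self-contained proof; it invokes Theorem~9.4 of Gal's book on search games applied to the recurrence inequality $\sum_{j\le i} x_j \le 4x_{i-1}$. Your attempt to prove the converse directly is therefore more ambitious than what the paper does, and your setup is sound: the reduction to the convex-combination inequality for $y_i=x_i/2^i$, and the treatment of the case where $\inf_i y_i$ is attained, are both correct.

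The gap is the unattained-infimum case, which you explicitly leave as a ``plan'' ending in an unsubstantiated ``either \dots or''. The ingredients you list do not close it. If $m=\inf_i y_i>0$ is approached only along a subsequence $i_n\to-\infty$, the finiteness of $\sum_{j\le i}2^j y_j$ yields no contradiction (it holds for any bounded $y$), and knowing $y_{i_n+1}\to m$ and $y_{i_n-k}\to m$ for each fixed $k$ says nothing about the indices lying between consecutive $i_n$, so no reduction to the attained case is in sight; the case $i_n\to+\infty$ is equally unresolved. A clean way to finish --- which in fact removes the case split entirely --- is to change variables to the normalized partial sums $T_i = 2^{-i}\sum_{j\le i}x_j$ rather than $y_i$. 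The 4-robustness condition $\sum_{j\le i+1}x_j\le 4x_i$ is exactly $T_{i+1}\le 2T_i-T_{i-1}$, i.e., the increments $\Delta_i=T_i-T_{i-1}$ are non-increasing over all of $\mathbb{Z}$. If some $\Delta_{i_0}>0$, then $T_{i_0-N}\le T_{i_0}-N\Delta_{i_0}\to-\infty$ as $N\to\infty$; if some $\Delta_{i_0}<0$, then $T_{i_0+N}\le T_{i_0}+N\Delta_{i_0}\to-\infty$. Both contradict $T_i>0$, so $T$ is constant, $\sum_{j\le i}x_j=c\cdot 2^i$, and $x_i=c\cdot 2^{i-1}$, which after your reindexing step is $X(\lambda)$ for some $\lambda\in[0,1)$. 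I recommend replacing the last paragraph of your argument with this.
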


% For an intuitive interpretation of consistency, we observe that in the distributional setting, a schedule of consistency $c(X,\mu)$ will always complete a contract execution of length at most a fraction $1/c(X,\mu)$ of $\E[\mu]$, where the latter reflects the best-possible performance of an ideal system that does not abide by any robustness constraints, and would simply run the contract algorithm for a length equal to $\E[\mu]$. 
% \todo{shouldn't it be ``equal to $\mu$''? -- marcin}
% Similarly, for multiple predictions, a schedule of consistency $c(X,\mu)$ will complete a contract execution of length at most a fraction $1/c(X,\mu)$ of the length that the {\em best} expert in $P$ would be able to achieve, even without taking into account any robustness constraints. 

\section{Distributional advice}

We begin with the setting in which the advice is in the form of a given distribution $\mu$. We first define an appropriate collection $S_n$ of $n$ schedules which will be instrumental towards finding an optimal schedule. 

\begin{definition}
For any $n$, let $S_n$ denote the following collection of $n$ schedules $X_0
\ldots ,X_{n-1}$, defined as $X_j=X(j/n)=(2^{i-j/n})_{i \in \mathbb{Z}}$.
\label{def:portfolio}
\end{definition}

\begin{theorem}
For any $n \in \mathbb{N}^+$, there exists a 4-robust schedule in $S_n$ that has consistency at
most $4n \cdot (2^{1/n}-1)$.
\label{thm:upper.dist}
\end{theorem}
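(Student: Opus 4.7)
By Proposition~\ref{lem:4-robust}, each $X_j = X(j/n)$ with $j \in \{0,\ldots,n-1\}$ is $4$-robust, so all $n$ schedules in $S_n$ are automatically $4$-robust. The task reduces to showing that at least one schedule in $S_n$ has consistency at most $4n(2^{1/n}-1)$ on the distribution $\mu$. The plan is a simple averaging argument: I will lower-bound the \emph{sum of profits} $\sum_{j=0}^{n-1} \ell(X_j,z)$ pointwise in $z$, take expectations over $z\sim\mu$, and pick the best $j$.

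The first step is a direct computation of $\ell(X_j,z)$. Since the completion time of the $i$-th contract in $X_j$ is $\sum_{k\leq i} 2^{k-j/n} = 2^{i+1-j/n}$, one gets $\ell(X_j,z) = 2^{i^*(j)-j/n}$ with $i^*(j) = \lfloor \log_2 z + j/n - 1\rfloor$. Writing $\log_2 z = m + \beta$ where $m=\lfloor\log_2 z\rfloor$ and $\beta \in [0,1)$, this becomes
\[
\ell(X_j,z) \;=\; 2^{m-1+\lfloor\beta+j/n\rfloor-j/n}.
\]
For $\beta \in [0,1)$ and $j \in \{0,\ldots,n-1\}$, the floor equals $0$ when $j < n(1-\beta)$ and equals $1$ otherwise. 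Setting $p = \lceil n(1-\beta)\rceil$, I split the sum $\sum_{j=0}^{n-1}\ell(X_j,z)/z$ into two geometric series with common ratio $2^{-1/n}$.

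The second step is the bookkeeping for those two geometric sums. After summing,
\[
\frac{1}{z}\sum_{j=0}^{n-1}\ell(X_j,z)
\;=\;
\frac{2^{-\beta}}{1-2^{-1/n}}\cdot\frac{1}{2}\cdot 2^{-p/n}.
\]
Writing $p = n(1-\beta) + \delta$ with $\delta \in [0,1)$, the exponent telescopes and the expression simplifies to $\frac{2^{-\delta/n}}{4(1-2^{-1/n})}$. Since $\delta < 1$, this is at least $\frac{1}{4(2^{1/n}-1)}$, independently of $z$. Rearranging,
\[
\frac{1}{n}\sum_{j=0}^{n-1}\ell(X_j,z)\;\geq\;\frac{z}{4n(2^{1/n}-1)}\qquad\text{for every }z>0.
\]

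The last step is to take the expectation over $z\sim\mu$ (using linearity) and apply a standard averaging argument: the inequality $\frac{1}{n}\sum_{j=0}^{n-1}\E_{z\sim\mu}[\ell(X_j,z)] \geq \frac{\E_{z\sim\mu}[z]}{4n(2^{1/n}-1)}$ implies that at least one index $j^\star$ satisfies $\E_{z\sim\mu}[\ell(X_{j^\star},z)] \geq \E_{z\sim\mu}[z]/(4n(2^{1/n}-1))$, so by definition~\eqref{eq:cons.distr}, $c(X_{j^\star},\mu) \leq 4n(2^{1/n}-1)$, as claimed.

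The only delicate point is the bookkeeping around the boundary case when $\beta$ is an integer multiple of $1/n$ (so $\delta = 0$ and $p$ falls exactly on an integer); this is handled by observing that the lower bound $2^{-\delta/n}\geq 2^{-1/n}$ still holds in the limit, so the inequality degrades only to a non-strict one, which is enough. Apart from this edge case, the argument is essentially a geometric series identity plus averaging, and I expect no other real obstacle.
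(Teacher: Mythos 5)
Your proof is correct and follows essentially the same route as the paper's: both rest on summing the profits $\ell(X_j,z)$ over the $n$ shifted schedules, evaluating the resulting geometric series, and applying an averaging argument to extract one good index $j^\star$. The only difference is presentational — you derive an exact pointwise identity $\frac{1}{z}\sum_j \ell(X_j,z) = \frac{2^{-\delta/n}}{4(1-2^{-1/n})}$ and bound it uniformly in $z$, whereas the paper partitions $(0,\infty)$ into intervals $I_i^k$ and rounds $z$ up to interval endpoints; your pointwise version is, if anything, slightly cleaner (and your worry about $\delta=0$ is unfounded, since that case only makes the bound better).
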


\begin{proof}
First, observe that any schedule $X_j$ satisfies the conditions 
of Proposition~\ref{lem:4-robust}, and hence is $4$-robust. Thus, it remains 
to show that for any distribution $\mu$ over $(0,+\infty)$, 
there exists a schedule in $S_n$ with the desirable
consistency. 

We define the intervals $I_i^k$ for $i \in \Z$ and 
$k \in \{0, \dots, n-1\}$ as $I_i^k = [2^{i+ k/n}, 2^{i +(k+1)/n})$.
Note that the intervals $I_i^k$ are disjoint and their union is equal to the 
support of $\mu$, i.e.,
$\biguplus_{i \in \Z} \biguplus_{0 \leq k \leq n-1} I_i^k = (0,+\infty)$.

If the interruption time $z$ is drawn from distribution $\mu$, then
its expected value is 
\begin{align*}
\E_{z\sim\mu}[z] 
    & \leq \sum_{i \in \Z}
        \sum_{k=0}^{n-1} 2^{i+(k+1)/n}\cdot \mu(I_i^k) \\
    & \leq \sum_{i \in \Z} 2^i \cdot \sum_{k=0}^{n-1} \mu(I_i^k) \cdot 2^{k/n} 
    \cdot 2^{1/n}.
\end{align*}

\begin{figure}
\centering
\includegraphics[width=0.9\columnwidth]{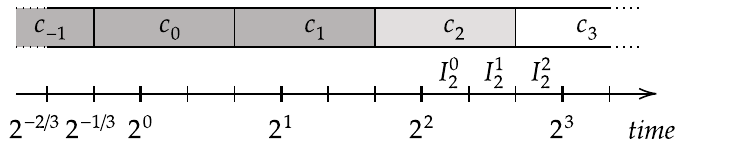}
\caption{Illustration of the computation of $\ell(X_j,z)$ for $n=3$ and $j=1$.
Note that the time scale is logarithmic. Fix an interruption point $z \in [2^2, 2^3)$. 
Then $z$ is contained in $I_2^k$ for some value of $k \in \{0,\dots,n-1\}$.
If $k \geq n - j = 2$, then the largest completed contract is $c_2$ (of length $2^{2-1/3}$),
and otherwise, the largest completed contract is $c_1$ (of length $2^{1-1/3}$).}
\label{fig:upper-dist}
\end{figure}

We now estimate the expected value of $\ell(X_j,z)$ for each schedule~$X_j$ with 
$j \in \{0, \dots, n-1\}$.
To this end, we consider a fixed value of interruption time $z$ and let $I_i^k$ be the interval containing $z$. 
We note that if $k\geq n-j$, then $2^{i+1-j/n}\leq z < 2^{i+2-j/n}$, 
and thus $\ell(X_j,z) = 2^{i-j/n}$.
Otherwise, i.e., if $k<n-j$, we have $\ell(X_j,z) = 2^{i-1-j/n}$.
See \autoref{fig:upper-dist} for a pictorial description of this computation.

Hence,
\begin{align*}
\E_{z\sim\mu}[\ell(X_j,z)] 
    & = \sum_{i \in \Z} \sum_{k=0}^{n-j-1}2^{i-1-j/n}
        \cdot\mu(I_i^k) \\
    & + \sum_{i \in \Z} \sum_{k=n-j}^{n-1}2^{i-j/n}\cdot\mu(I_i^k).
\end{align*}

Next, we compute the sum of the largest contracts over all schedules $X_j$, as follows:
\begin{align*}
& \sum_{j=0}^{n-1} \E_{z\sim\mu}[\ell(X_j,z)] \\
&= \sum_{j=0}^{n-1} \sum_{i \in \Z} 
    \left(\sum_{k=0}^{n-j-1}2^{i-1-\frac jn} \cdot \mu(I_i^k) 
    + \sum_{k=n-j}^{n-1}2^{i-\frac jn}\cdot\mu(I_i^k)\right)\\
&=  \sum_{i \in \Z} 2^i \cdot \sum_{k=0}^{n-1} \mu(I_i^k) \cdot 
    \left(\frac{1}{2} \sum_{j=0}^{n-k-1}2^{-j/n} + \sum_{j=n-k}^{n-1} 2^{-j/n}\right) \\
    \displaybreak[1]
&=  \sum_{i \in \Z} 2^i \cdot \sum_{k=0}^{n-1} \mu(I_i^k)\cdot 
    \left(
        \frac{1}{2} \cdot \frac{1-2^{\frac{k-n}{n}}}{1-2^{-\frac 1n}} 
        + 2^{\frac{k-n}{n}}\cdot\frac{1-2^{\frac{-k}{n}}}{1-2^{-\frac 1n}}
    \right) \\
&=  \sum_{i \in \Z}2^i \cdot \sum_{k=0}^{n-1} \mu(I_i^k) 
    \cdot \frac{2^{k/n-2}}{1-2^{-1/n}} \\
     \displaybreak[1]
&=  \frac{\E_{z \sim \mu}[z]}{2^{1/n}} \cdot \frac{2^{-2}}{1-2^{-1/n}} 
=  \frac{\E_{z \sim \mu}[z]}{4  (2^{1/n} - 1)}.
\end{align*}

We can now use an averaging argument, and deduce that there exists a schedule $X_{j^*}$ in $S_n$ such that 
\begin{align*}
    \E_{z\sim\mu}[\ell(X_{j^*},z)] 
    & \geq \frac{1}{n} \cdot \sum_{j=0}^{n-1} \E_{z\sim\mu}[\ell(X_j,z)] \\
    &\geq  \frac{\E_{z \sim \mu}[z]}{4n  (2^{1/n} - 1)}. 
\end{align*}
Hence, the consistency of the schedule $X_{j^*}$ is 
$c(X_{j^*},\mu) = \E_{z \sim \mu}[z] / \E_{z \sim \mu}[\ell(X_{j^*},z)] 
\leq 4n (2^{1/n} - 1)$.
\end{proof}

We observe that $4n  (2^{1/n}-1)$ is decreasing and tends to $4  \ln 2$ as $n \to \infty$. Indeed, $4n  (2^{1/n}-1) = 4n  (e^{\ln 2 \cdot
(1/n)}-1) = 4n  (\ln 2 \cdot (1/n) + O(1/n^2)) = 4  \ln 2 + O(1/n)$.
This means that for any $\epsilon > 0$, there exists an integer $n = O(1/\epsilon)$ such
that $4n (2^{1/n}-1) \leq 4  \ln 2+\epsilon$. Assuming that we can evaluate the probability associated with an interval  in constant time, we obtain the following corollary.

\begin{corollary}
For any arbitrarily small $\epsilon>0$, 
there is an algorithm with runtime polynomial in $O(1/\epsilon)$
for devising a~4-robust schedule that has consistency at most $4 \cdot \ln 2+\epsilon$.
\label{cor:upper.dist}
\end{corollary}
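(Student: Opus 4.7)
The plan is to combine Theorem~\ref{thm:upper.dist} with the Taylor expansion of $4n(2^{1/n}-1)$ noted in the paragraph preceding the corollary, and then to give a simple enumeration algorithm over the collection $S_n$. Two items need to be verified: (i) that $n = O(1/\epsilon)$ suffices to push the guaranteed consistency below $4\ln 2 + \epsilon$, and (ii) that one can actually locate the best schedule in $S_n$ in time polynomial in $n$.

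For (i), I would write $2^{1/n} = \exp((\ln 2)/n) = 1 + (\ln 2)/n + O(1/n^2)$, which yields $4n(2^{1/n}-1) = 4\ln 2 + O(1/n)$. Hence there is an absolute constant $c>0$ such that choosing $n = \lceil c/\epsilon \rceil$ forces $4n(2^{1/n}-1) \leq 4\ln 2 + \epsilon$, and in particular $n = O(1/\epsilon)$.

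For (ii), the algorithm simply enumerates the $n$ schedules $X_0,\dots,X_{n-1}$ of $S_n$; for each $X_j$ it evaluates $\E_{z\sim\mu}[\ell(X_j,z)]$ using the closed-form decomposition derived in the proof of Theorem~\ref{thm:upper.dist}, namely
\[
\E_{z\sim\mu}[\ell(X_j,z)] = \sum_{i\in\Z}\Bigl(\sum_{k=0}^{n-j-1} 2^{i-1-j/n}\,\mu(I_i^k) + \sum_{k=n-j}^{n-1} 2^{i-j/n}\,\mu(I_i^k)\Bigr),
\]
and returns the $X_{j^*}$ that maximizes this quantity. Theorem~\ref{thm:upper.dist} then guarantees consistency at most $4n(2^{1/n}-1) \leq 4\ln 2+\epsilon$, and Proposition~\ref{lem:4-robust} guarantees $4$-robustness, so both requirements of the corollary are met.

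The one technical point, which I expect to be the main obstacle, is that the expression above is a doubly-infinite sum over $i\in\Z$. Since $\E_{z\sim\mu}[z]$ is finite, the tails $\sum_{|i|>M}\sum_k \mu(I_i^k)\cdot 2^i$ decay, so one can truncate to a window of $M = O(\log(n/\epsilon))$ scales while preserving both the relative order of the $n$ candidate schedules and the final consistency bound up to an additive $O(\epsilon)$ term absorbable into the bound. Under the stated assumption that each $\mu(I_i^k)$ is queryable in $O(1)$, each of the $n$ expectations then costs $O(nM)$ arithmetic operations, so the full procedure runs in time polynomial in $n$, hence polynomial in $1/\epsilon$.
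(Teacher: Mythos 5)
Your proposal matches the paper's argument: the paper likewise obtains $n=O(1/\epsilon)$ from the expansion $4n(2^{1/n}-1)=4\ln 2+O(1/n)$ and (implicitly) selects the best of the $n$ schedules in $S_n$ under the stated assumption that interval probabilities $\mu(I_i^k)$ are queryable in constant time. Your extra discussion of truncating the sum over $i\in\Z$ is more careful than the paper, which does not address this point (though note that without further assumptions on $\mu$, finiteness of $\E_{z\sim\mu}[z]$ alone does not pin down a truncation window as small as $M=O(\log(n/\epsilon))$).
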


We now show that the upper bound of Theorem~\ref{thm:upper.dist} is tight.

%\begin{theorem}
%For any $n \in \mathbb N$, every collection $C_n$ that consists of $n$ 4-robust schedules cannot contain a schedule of consistency smaller than $4n(2^\frac{1}{n}-1)$.
%\label{thm:lower.dist}
%\end{theorem}
%
%By allowing $n \to \infty$, Theorem~\ref{thm:lower.dist} shows that no 4-robust schedule can have consistency better than $4\ln 2$. We can obtain the same lower bound by relying to predictions given as a continuous distribution.
%
%\begin{theorem}[Appendix]
%For every 4-robust schedule $X$, there is a continuous distribution $\mu$ such that $c(X,\mu) \geq 4\ln2.$
%\label{thm:lower.dist.cont}
%\end{theorem}

% put that here as I don't have other tokens
\newcommand{\expmul}[2][\mu]{\ensuremath{\mathbb E_{z\sim#1}[\ell(#2,z)]}\xspace}
\newcommand{\expmu}[1][\mu]{\ensuremath{\mathbb E_{z\sim#1}[z]}\xspace}

\begin{theorem}
For any $n \in \mathbb N$, there exists a distributional prediction for which every collection $C_n$ that consists of $n$ 4-robust schedules cannot contain a schedule of consistency smaller than $4n \cdot (2^{1/n}-1)$.
\label{thm:lower.dist}
\end{theorem}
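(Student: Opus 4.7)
I would prove the lower bound by exhibiting a single distribution $\mu^*$ (depending on $n$) under which every 4-robust schedule has consistency at least $T_n := 4n(2^{1/n}-1)$; combined with Proposition~\ref{lem:4-robust}, which parametrizes 4-robust schedules as $\{X(\lambda): \lambda \in [0, 1)\}$, this directly yields the quantified claim over all collections $C_n$. The plan is to first construct $\mu^*$ so that the averaging argument in the proof of Theorem~\ref{thm:upper.dist} becomes tight on the grid $S_n$, and then promote the tightness to every $\lambda \in [0, 1)$ via a scale-invariance property.

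For the construction, I take $\mu^*$ atomic with mass $p_k$ at $z_k = 2^{(k+1)/n} - \delta$ (in the limit $\delta \to 0^+$) for $k \in \{0, \dots, n-1\}$. The weights are chosen to equalize the expected profit $\E_{z\sim\mu^*}[\ell(X_j, z)]$ across the grid schedules $X_j = X(j/n)$. The interval-based formula for $\ell(X_j, z)$ derived in the proof of Theorem~\ref{thm:upper.dist} gives $\E_{z\sim\mu^*}[\ell(X_j, z)] = 2^{-1-j/n}(1 + S_j)$ with $S_j := \sum_{k \geq n-j} p_k$. Forcing this expression to equal a constant $1/2$ across $j$ yields $S_j = 2^{j/n}-1$, and hence $p_k = (2^{1/n}-1)\cdot 2^{(n-1-k)/n}$ for $k \geq 1$ and $p_0 = 2 - 2^{(n-1)/n}$; these are nonnegative and sum to $1$. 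A telescoping computation then yields $\E_{z\sim\mu^*}[z] = 2n(2^{1/n}-1)$, so $c(X_j, \mu^*) = T_n$ for every $j \in \{0, \dots, n-1\}$, matching the upper bound exactly.

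The hard part, and my main obstacle, is extending the equality $c = T_n$ from the grid $\{0, 1/n, \dots, (n-1)/n\}$ to arbitrary $\lambda \in [0, 1)$. My plan is to tile the atomic construction across all dyadic scales: include atoms at $2^i z_k$ for every $i \in \Z$ and $k \in \{0,\dots,n-1\}$, with a weighting designed so that $\mu^*$ becomes invariant under multiplication by $2^{1/n}$. Because $2^{1/n}X(\lambda) = X(\lambda - 1/n)$ as sequences (with $\lambda$ taken modulo $1$), the scale invariance of $\mu^*$ forces $\lambda \mapsto c(X(\lambda), \mu^*)$ to be periodic in $\lambda$ with period $1/n$, so the identity $c = T_n$ established at the grid points propagates to every $\lambda$. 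The principal technical hurdle is realizing this heuristic as a bona fide probability distribution — a genuinely scale-invariant measure on $(0,\infty)$ has infinite total mass — which I would address by truncating to a dyadic window $[2^{-N}, 2^N]$, renormalizing, and checking that the ratio $\E_{z\sim\mu^*}[z]/\E_{z\sim\mu^*}[\ell(X(\lambda), z)]$ converges to $T_n$ uniformly in $\lambda$ as $N \to \infty$.
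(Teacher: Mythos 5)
The construction in your first two paragraphs is sound, and it is essentially the paper's distribution specialized to the equally spaced collection $S_n$: atoms placed just before the contract completion times of the grid schedules, weights chosen so that all $n$ expected profits equal $1/2$, expected interruption $2n(2^{1/n}-1)$, hence consistency exactly $4n(2^{1/n}-1)$ for every $X_j\in S_n$. The gap is in your last paragraph, and it is not merely a technical hurdle: the statement you are trying to prove there is false. No single distribution $\mu^*$ can force $c(X(\lambda),\mu^*)\ge 4n(2^{1/n}-1)$ for all $\lambda\in[0,1)$. Indeed, since contract $i$ of $X(\lambda)$ completes at time $2^{i+1-\lambda}$, one has $\ell(X(\lambda),z)=(z/2)\cdot 2^{-u}$ where $u$ is the fractional part of $\log_2 z-1+\lambda$, and integrating over $\lambda$ gives, for every $\mu$,
\[
\int_0^1 \E_{z\sim\mu}[\ell(X(\lambda),z)]\,\mathrm d\lambda \;=\; \E_{z\sim\mu}[z]\cdot\int_0^1 \frac{2^{-u}}{2}\,\mathrm du \;=\; \frac{\E_{z\sim\mu}[z]}{4\ln 2},
\]
so some $\lambda$ always achieves $c(X(\lambda),\mu)\le 4\ln 2<4n(2^{1/n}-1)$. (This is consistent with Theorem~\ref{thm:lower.dist.cont}, which only claims $4\ln 2$ against all 4-robust schedules simultaneously.) Your periodicity argument cannot circumvent this: if $\mu^*$ is invariant under scaling by $2^{1/n}$, then $\lambda\mapsto c(X(\lambda),\mu^*)$ indeed has period $1/n$, but the grid points $0,1/n,\dots,(n-1)/n$ all lie in a single residue class modulo that period, so equality at the grid propagates only to the grid itself and says nothing about $\lambda$ strictly between grid points, where the consistency must dip to $4\ln 2$ or below.

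The theorem therefore has to be read with the quantifiers the other way around: for every collection $C_n=\{X(\lambda_0),\dots,X(\lambda_{n-1})\}$ there is a distribution, depending on the $\lambda_k$, against which no member of $C_n$ has consistency below $4n(2^{1/n}-1)$. Your equalization idea adapts directly to this reading and then recovers the paper's proof: place the atoms just before the completion times $2^{2-\lambda_k}$ of the given (not necessarily equally spaced) schedules, choose the weights so that all $n$ expected profits coincide (they come out to $2^{-\lambda_0}$ in the paper's normalization), and compute that the common consistency equals $4\bigl(-n+\sum_{j<n}2^{\lambda_{j+1}-\lambda_j}\bigr)$ with $\lambda_n=\lambda_0+1$. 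Since the increments $\lambda_{j+1}-\lambda_j$ are nonnegative and sum to $1$, convexity of $x\mapsto 2^x$ gives $\sum_{j<n}2^{\lambda_{j+1}-\lambda_j}\ge n\cdot 2^{1/n}$, which yields the claimed bound for every collection.
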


\begin{proof}
Using Proposition~\ref{lem:4-robust}, let us denote, without loss of generality, the 4-robust schedules in $C_n$ by $X(\lambda_k) = (2^{i-\lambda_k})_{i=-\infty}^{\infty}$ for $k\in[0,n-1]$ and $0\leq\lambda_k<\lambda_{k+1}< 1$.

We will define a distribution $\mu$ that has an $n$-point support, and where the $k$-th point occurs at time $2^{2-\lambda_k-\varepsilon}$ with probability $p_k$, for $\varepsilon>0$ arbitrarily small. For $k<n-1$, we define  $p_k = 2^{\lambda_{k+1}-\lambda_0} -2^{\lambda_{k}-\lambda_0}$, and define $p_{n-1} = 2-2^{\lambda_{n-1}-\lambda_0}$. Let $P_k=\sum_{j\leq k} p_j$, so $P_k=2^{\lambda_{k+1}-\lambda_0}-1$.

We have that  $\E_{z\sim \mu} [\ell(X(\lambda_0),z)]= 2^{-\lambda_0}$ as the support of $\mu$ is in the interval $[2^{1-\lambda_0},2^{2-\lambda_0}]$. Therefore, if an interruption occurs at any point in the support of $\mu$, the last completed contract of $X(\lambda_0)$ is of length $2^{-\lambda_0}$.

Similarly, for any $k>0$, we have:
\begin{align}
\expmul{&X_{\lambda_k}} =
P_{k-1}2^{1-\lambda_k}+(1-P_{k-1})2^{-\lambda_k}\nonumber\\ &= 2^{1-\lambda_0}-2^{1-\lambda_k}+2^{1-\lambda_k}-2^{-\lambda_0} 
~~ = 2^{-\lambda_0}.
\label{eq:lb1}
\end{align}

The expected value of $\mu$ is (defining $\lambda_{n}=\lambda_0+1$):
\begin{align}
\expmu &=  \sum_{j<n} (2^{\lambda_{j+1}-\lambda_0} - 2^{\lambda_{j}-\lambda_0})\cdot 2^{2-\lambda_j} \nonumber\\
&= 2^{-\lambda_0} \cdot \sum_{j<n} (2^{2+\lambda_{j+1}-\lambda_j}-2^2) \nonumber\\
&= 2^{2-\lambda_0} \cdot \Big(-n+\sum_{j<n}  2^{\lambda_{j+1}-\lambda_j}\Big).
\label{eq:lb2}
\end{align}
From~\eqref{eq:lb1} and~\eqref{eq:lb2}, the consistency of any $X(\lambda_k)$ is: 
$$ c(X(\lambda_k),\mu)= 4\cdot \Big(- n + \sum_{j<n}  2^{\lambda_{j+1}-\lambda_j}\Big).$$

$c(X(\lambda_k),\mu)$ is minimized when the $\lambda_j$ are chosen so as to minimize the sum term, while respecting the constraint that the sum of all the exponents equals $\lambda_n-\lambda_0=1$ and each exponent belongs to $[0,1)$. By the convexity of the function $x\mapsto2^x$, this occurs when all terms are equal to $2^{1/n}$, hence
$$ c(X(\lambda_k),\mu) \geq 4\cdot\left(-n+n\cdot 2^{1/n}\right) = 4n\cdot(2^{1/n}-1).\qedhere$$
\end{proof}

By allowing $n \to \infty$, Theorem~\ref{thm:lower.dist} shows that no 4-robust schedule can have consistency better than $4\ln 2$. Note that the theorem relies on a probability distribution with $n$ mass points. However, we  can obtain the same lower bound by relying to a prediction given as a continuous distribution.

\begin{theorem}[Appendix]
For any $D>0$, there is a continuous distribution $\mu_D$ over the interval $[D,2D]$ such that, for every 4-robust schedule $X$,  $c(X,\mu_D) \geq 4\ln2.$
\label{thm:lower.dist.cont}
\end{theorem}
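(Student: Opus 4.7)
The plan is to reduce to $D=1$ by scale invariance (rescaling both the distribution and every schedule by $D$ preserves the ratio defining $c(X,\mu)$), and then to exhibit one explicit continuous distribution $\mu$ on $[1,2]$ for which every 4-robust schedule has consistency exactly $4\ln 2$. By Proposition~\ref{lem:4-robust}, any 4-robust schedule has the form $X(\lambda)=(2^{i-\lambda})_{i\in\mathbb Z}$ for some $\lambda\in[0,1)$, so it suffices to bound $c(X(\lambda),\mu)$ uniformly in $\lambda$.

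To select the density, I would pass to the continuum limit of the discrete worst case used in Theorem~\ref{thm:lower.dist}: as $n\to\infty$, its support $\{2^{2-k/n}\}_{k<n}$ becomes dense in $[2,4]$ with mass $p_k\sim(\ln 2)\,2^{k/n}/n$. Under the change of variables $y=2^{2-\lambda}$ and a rescaling into $[1,2]$, this translates into the density $f(z)=2/z^2$ on $[1,2]$, which clearly integrates to $1$. For general $D$ I would use $f_D(z)=2D/z^2$ on $[D,2D]$.

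The proof itself then reduces to two elementary integrations that I would carry out in order. First, I compute $\mathbb E_{z\sim\mu}[z]=\int_1^2 2/z\,dz=2\ln 2$. Second, for a fixed $\lambda$, the completion time of the $i$-th contract of $X(\lambda)$ equals $2^{i+1-\lambda}$, so on $[1,2]$ the profit $\ell(X(\lambda),z)$ takes only the two values $2^{-1-\lambda}$ (for $z<2^{1-\lambda}$) and $2^{-\lambda}$ (for $z\ge 2^{1-\lambda}$). Splitting the integral at the threshold $z^\star=2^{1-\lambda}$ and computing the two probability masses, the $\lambda$-dependent terms should cancel and I expect to obtain
\begin{equation*}
\mathbb E_{z\sim\mu}[\ell(X(\lambda),z)] = 2^{-1-\lambda}(2-2^{\lambda}) + 2^{-\lambda}(2^{\lambda}-1) = \tfrac12,
\end{equation*}
independently of $\lambda$. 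From this, $c(X(\lambda),\mu)=2\ln 2/(1/2)=4\ln 2$ follows for every $\lambda$, concluding the proof.

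The only substantive step is guessing the density $f(z)=2/z^2$; once it is in hand the rest is routine, because the crucial cancellation making $\mathbb E[\ell(X(\lambda),z)]$ independent of $\lambda$ is precisely what is needed to match the upper bound simultaneously for every 4-robust schedule. The minor care required is checking that $f$ is a valid density and that the threshold $z^\star=2^{1-\lambda}$ lies in $[1,2]$ for all $\lambda\in[0,1)$, so no edge-case splitting of the integration domain arises.
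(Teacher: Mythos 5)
Your proposal is correct and follows essentially the same route as the paper's proof: the same density $f_D(x)=2D/x^2$ on $[D,2D]$, the same computation $\E_{z\sim\mu_D}[z]=2D\ln 2$, and the same two-case evaluation of the profit around the single contract-completion threshold in the support, yielding $\E_{z\sim\mu_D}[\ell(X(\lambda),z)]=D/2$ independently of $\lambda$. The only cosmetic differences are your normalization to $D=1$ via scale invariance (the paper keeps general $D$ with an index $k$ satisfying $2^{k-\lambda}\in[D,2D)$) and your motivation of the density as the continuum limit of the discrete construction in Theorem~\ref{thm:lower.dist}, neither of which changes the argument.
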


\subsection{Smoothness with prediction errors}

We will now discuss an interesting disconnect between deterministic and distributional predictions that arises in contract scheduling. Consider first the setting of a single, deterministic prediction, for which we know that there exist 4-robust, 2-consistent schedules~\cite{DBLP:journals/jair/AngelopoulosK23}. Given such prediction, say $\tau$, let $\eta=|T-\tau|$ denote the prediction {\em error} associated with $\tau$, where $T$ denotes the actual interruption.

\begin{proposition}[Appendix]
For any arbitrarily small $\epsilon>0$, and any schedule $X$ with prediction $\tau$ that is 4-robust and 2-consistent, there exists $T$ such that $\eta=|T-\tau|=\epsilon$ and $\ell(X,T) \leq \frac{1}{4}T+\frac{\epsilon}{4}$.
\label{prop:single}
\end{proposition}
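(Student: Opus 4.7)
The plan is to combine the structural characterization given by Proposition~\ref{lem:4-robust} with an analysis of what 2-consistency forces on the placement of the prediction $\tau$ within the schedule, then perturb just below $\tau$ to expose the worst-case behavior.

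First, I would invoke Proposition~\ref{lem:4-robust}: since $X$ is 4-robust, it must be of the form $X(\lambda) = (2^{i-\lambda})_{i\in\mathbb{Z}}$ for some $\lambda \in [0,1)$ (where this $\lambda$ may depend on $\tau$). With the bi-infinite convention, the completion time of contract $i$ in $X(\lambda)$ is $\sum_{j=-\infty}^{i} 2^{j-\lambda} = 2^{i-\lambda+1}$, and for any $t \in [2^{i-\lambda+1}, 2^{i-\lambda+2})$ we have $\ell(X(\lambda),t) = 2^{i-\lambda}$, so the ratio $t/\ell(X(\lambda),t)$ ranges continuously over $[2,4)$.

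Second, I would use the 2-consistency hypothesis to pin down $\tau$. Writing $\tau \in [2^{m-\lambda+1}, 2^{m-\lambda+2})$ for the appropriate integer $m$, we have $\ell(X(\lambda),\tau) = 2^{m-\lambda}$, so the ratio $\tau/\ell(X(\lambda),\tau)$ equals $\tau / 2^{m-\lambda} \in [2,4)$. For this ratio to be at most $2$, we must have equality at the left endpoint, i.e.,
\begin{equation*}
\tau = 2^{m-\lambda+1},
\end{equation*}
so $\tau$ must coincide exactly with the completion time of contract $m$ of $X(\lambda)$.

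Third, I would take $T = \tau - \epsilon$, so $|T-\tau|=\epsilon$. For $\epsilon < 2^{m-\lambda}$ (which holds since $\epsilon$ is arbitrarily small), we have $T \in [2^{m-\lambda}, 2^{m-\lambda+1})$, so contract $m$ has not yet completed by time $T$ and the largest completed contract is $m-1$, of length $2^{m-1-\lambda}$. Thus
\begin{equation*}
\ell(X,T) \;=\; 2^{m-1-\lambda} \;=\; \tfrac{1}{4} \cdot 2^{m-\lambda+1} \;=\; \tfrac{\tau}{4} \;=\; \tfrac{T+\epsilon}{4} \;=\; \tfrac{T}{4} + \tfrac{\epsilon}{4},
\end{equation*}
which gives the desired inequality (in fact with equality).

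No step is genuinely hard; the only point requiring a small amount of care is the second step, where I must rule out the possibility that 2-consistency is achieved with $\tau$ sitting strictly inside an interval $[2^{m-\lambda+1}, 2^{m-\lambda+2})$. This falls out of the observation that $\tau/\ell$ as a function of $\tau$ is a sawtooth that attains the value $2$ only at left endpoints of such intervals. Once $\tau$ is forced to lie at a completion time, the perturbation $T = \tau - \epsilon$ immediately drops the profit by a factor of $2$, producing the stated discontinuous ``cliff'' in performance and confirming the absence of smoothness for deterministic predictions.
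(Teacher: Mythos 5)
Your proof is correct and follows essentially the same route as the paper: force $\tau$ to coincide with the completion time of a contract of length $\tau/2$, then interrupt at $T=\tau-\epsilon$ so that only the preceding contract of length $\tau/4$ has completed. The only difference is that you derive the structural fact about $\tau$ from Proposition~\ref{lem:4-robust}, whereas the paper cites it directly from prior work on the single-prediction setting; your version is self-contained and equally valid.
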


Proposition~\ref{prop:single} shows that, in the single prediction setting, any schedule that simultaneously optimizes the consistency and the robustness  is extremely fragile with respect to prediction errors. Informally, in the presence of a tiny prediction error, the proposition shows that the consistency of the schedule becomes as large as the robustness, namely arbitrarily close to 4, hence the single prediction leads to no improvement in any practical situation.

%\todo{Here we should add the statement/proof of the result on distributional predictions and error}
\newcommand{\rp}{\ensuremath{\mathbb R^+}\xspace}
\newcommand{\emd}{\ensuremath{\mathrm{EMD}}\xspace}

In contrast, we will show that this is not the case for the worst-case distributional prediction of Theorem~\ref{thm:lower.dist.cont}. Specifically, in Theorem~\ref{lem:smoothness}, we will show that the performance of the schedule degrades smoothly as a function of the prediction error. For some intuition behind the proof, we will use the fact that worst-case distributional predictions require that a large time interval contributes to the schedule's expected profit, and that the distribution is rather ``balanced'' over that interval. Therefore, if an adversary were to change substantially the performance of the schedule, by altering the predicted distribution, it would have to shift a large amount of probability mass over a long time span. This motivates the choice of the well-known Earth Mover's Distance (EMD) as a metric of the prediction error. We will show that a small EMD error has a likewise small effect on the performance.

Given two probability distributions $\mu$ and $\mu'$ over $\rp$, the Earth Mover's Distance (\emd)~\cite{rubner1998metric}  intuitively represents the minimum cost incurred in order to obtain $\mu'$ from $\mu$, where moving an infinitesimal probability mass costs its amount multiplied by the distance moved. Formally, let
$\Psi(\mu,\mu')\subset(\rp\times \rp)\rightarrow \rp$ be the set of functions
$\psi$ such that $\forall x\in \rp,~ \int_{\rp} \psi(x,y) \mathrm dy = \mu(x)$
and $\forall y\in \rp,~ \int_{\rp} \psi(x,y) \mathrm dx = \mu'(y)$. Here, $\psi(x,y)$ describes the amount of mass that is moved from point $x$ in $\mu$ to 
point $y$ in $\mu'$. The distance $\emd(\mu,\mu')$ is then defined as
$$\emd(\mu,\mu') = \inf_{\psi\in\Psi(\mu,\mu')} \int_{\rp} \int_{\rp} |x-y| \psi(x,y) \mathrm dx \mathrm dy.$$ 

\begin{theorem}
\label{lem:smoothness} Given $D\in\rp$, let $\mu_D$ be the probability distribution of Theorem~\ref{thm:lower.dist.cont}.
%whose density function equals $x\mapsto2D/x^2$ on $[D,2D]$.
Then, for, any probability distribution $\mu'$ such that
$\emd(\mu_D,\mu')\leq\eta$, for sufficiently small $\eta$, any 4-robust schedule $X$  with prediction $\mu_D$ satisfies
\[
\E_{z\sim \mu'}[\ell(X,z)] \geq \frac{\E_{z\sim \mu'}[z]}{4\ln 2+O(\sqrt{\eta/D})}.
\]
\end{theorem}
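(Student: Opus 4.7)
The plan is to bound separately the numerator $\E_{z\sim\mu'}[z]$ and the denominator $\E_{z\sim\mu'}[\ell(X,z)]$ of the consistency ratio under an EMD perturbation. By Proposition~\ref{lem:4-robust}, I may write $X = X(\lambda)$ for some $\lambda\in[0,1)$. As a preliminary, one verifies by direct computation that the distribution $\mu_D$ of Theorem~\ref{thm:lower.dist.cont}, which arises as the continuous limit of the discrete worst case of Theorem~\ref{thm:lower.dist} and is supported on $[D,2D]$ with density $\rho(z)=2D/z^2$, satisfies
\[
\E_{z\sim\mu_D}[z] = 2D\ln 2, \qquad \E_{z\sim\mu_D}[\ell(X(\lambda),z)] = D/2,
\]
both independent of $\lambda$, so that $c(X(\lambda),\mu_D)=4\ln 2$ is attained for every $\lambda$.

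For the numerator, the map $z\mapsto z$ is $1$-Lipschitz, so the Kantorovich--Rubinstein duality yields $|\E_{z\sim\mu'}[z]-\E_{z\sim\mu_D}[z]|\leq \emd(\mu_D,\mu')\leq\eta$, and therefore $\E_{z\sim\mu'}[z]\leq 2D\ln 2 + \eta$.

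The denominator is the delicate part, because $\ell(X(\lambda),\cdot)$ is piecewise constant and hence not Lipschitz. The key structural observation is that on $[D,2D]$, this function has exactly one breakpoint, at $b_\lambda := D\cdot 2^{1-\lambda}$, where it jumps from $D/2^{1+\lambda}$ to $D/2^{\lambda}$. Writing $\delta := F_{\mu'}(b_\lambda)-F_{\mu_D}(b_\lambda)$ for the gap between CDFs at that point, one obtains
\[
\E_{z\sim\mu'}[\ell(X(\lambda),z)] \;\geq\; \frac{D}{2} - \frac{D}{2^{1+\lambda}}\cdot |\delta| - O(\sqrt{\eta D}),
\]
where the last term absorbs the effect of any mass that $\mu'$ may transport outside $[D,2D]$: mass pushed above $2D$ can only increase $\ell$ by its monotonicity, so it is handled trivially, while mass pushed below $D$ is controlled by the same CDF-perturbation argument applied at the left endpoint. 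The central lemma I would prove is the EMD-to-CDF bound $|\delta|\leq \sqrt{2\eta\,\rho(b_\lambda)}$. Its proof uses the identity $\emd(\mu_D,\mu')=\int_0^\infty |F_{\mu_D}(x)-F_{\mu'}(x)|\,dx$ valid on $\mathbb{R}$, the Lipschitz constant $\rho(b_\lambda)=2^{2\lambda-1}/D$ of $F_{\mu_D}$ near $b_\lambda$, and the monotonicity of $F_{\mu'}$: if the CDFs differ by $\delta>0$ at $b_\lambda$, then the gap $F_{\mu'}-F_{\mu_D}$ remains at least $\delta-\rho(b_\lambda)(x-b_\lambda)$ on the interval $[b_\lambda,b_\lambda+\delta/\rho(b_\lambda)]$, contributing at least $\delta^2/(2\rho(b_\lambda))$ to the $L^1$ integral. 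Substituting yields $\tfrac{D}{2^{1+\lambda}}|\delta|\leq \tfrac{1}{2}\sqrt{\eta D}$, a bound that is remarkably independent of $\lambda$.

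Combining both estimates gives
\[
\frac{\E_{z\sim\mu'}[z]}{\E_{z\sim\mu'}[\ell(X,z)]} \;\leq\; \frac{2D\ln 2 + \eta}{D/2 - O(\sqrt{\eta D})} \;=\; 4\ln 2 + O(\sqrt{\eta/D}),
\]
which is equivalent to the claimed lower bound on $\E_{z\sim\mu'}[\ell(X,z)]$. The main obstacle I anticipate is making the EMD-to-CDF bound fully rigorous: the quadratic-in-$\delta$ cost is intuitive from the heuristic that moving mass $\delta$ across $b_\lambda$ in a density-$\rho$ region costs $\Theta(\delta^2/\rho)$, but the formal argument needs the $L^1$-distance-of-CDFs representation of EMD together with a careful one-sided integration that exploits only the local density of $\mu_D$. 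The treatment of mass transported outside $[D,2D]$ is secondary but must be handled uniformly in $\lambda$ so that the contributions at the other breakpoints of $X(\lambda)$ do not accumulate to dominate the $\sqrt{\eta/D}$ error term.
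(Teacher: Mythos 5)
Your proposal is correct and arrives at the stated bound, but the technical core is handled by a genuinely different mechanism than the paper's. The paper decomposes an optimal transport plan $\psi$ into an upward part $\psi^+$ and a downward part $\psi^-$: the upward part can only increase the expected profit while raising the mean by at most $\eta$, and the damage of $\psi^-$ is bounded by identifying the worst-case plan (mass taken from an interval of width $\delta_1$ just above a contract-completion time and pushed just below it), using the density lower bound $1/(2D)$ to get $\delta_1\leq\sqrt{8\eta D}$, the density upper bound $2/D$ to convert width into mass, and the bound $D$ on the length of any contract completed before $2D$ to convert mass into lost profit. You instead invoke Kantorovich--Rubinstein duality for the mean and the one-dimensional identity $\emd(\mu_D,\mu')=\int_0^\infty|F_{\mu_D}-F_{\mu'}|$ for the profit, deriving the quadratic bound $\delta^2/(2\rho(b))\leq\eta$ on the CDF gap at the interior breakpoint. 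Both arguments encode the same phenomenon---moving mass $\delta$ across a point in a region of density $\rho$ costs $\Theta(\delta^2/\rho)$ in EMD---but your route sidesteps the paper's somewhat informal ``the RHS is minimized when $\psi^-$ has such-and-such form'' step, and the cancellation of the breakpoint location $b$ in the product $\tfrac{b}{4}\cdot\sqrt{2\eta\cdot 2D/b^2}=\tfrac12\sqrt{\eta D}$ gives a $\lambda$-independent loss with a better constant than the paper's $8\sqrt{2\eta D}$. The paper's transport-plan decomposition, in exchange, does not rely on the $L^1$-of-CDFs identity and so would survive in settings where that one-dimensional formula is unavailable. Two small repairs to your sketch: the breakpoint in $[D,2D)$ is $2^{k-\lambda}$ for the unique integer $k$ with $2^{k-\lambda}\in[D,2D)$, not $D\cdot 2^{1-\lambda}$ in general (harmless, since your final bound does not depend on its location, only on the jump being $b/4$ and the density being $2D/b^2$); and the breakpoints below $D$ should be dispatched explicitly---their jumps $j_{b_i}=b_i/4$ decay geometrically, so $\sum_i j_{b_i}F_{\mu'}(b_i)\leq\tfrac12\int_0^D F_{\mu'}\leq\tfrac{\eta}{2}$, an $O(\eta)$ term that is indeed absorbed by $O(\sqrt{\eta D})$, confirming that these contributions do not accumulate.
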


\begin{proof}[Proof Sketch ] We give the main elements of the proof, and defer some technical parts to the appendix.

Recall that the distribution $\mu = \mu_D$ is defined over $[D,2D]$ by the density function $x\mapsto 2D/x^2$. This density function takes values in the interval $[1/(2D),2/D]$.
From Proposition~\ref{lem:4-robust}, we know that any
4-robust schedule $X$ is of the form 
$X(\lambda) = 2^{i-\lambda}$, for some $\lambda\in [0,1)$.
For given $\eta>0$,  let $\mu'$ be any probability distribution such that $\emd(\mu,\mu')\leq \eta.$

By the definition of $\emd$, let $\psi\in\Psi(\mu,\mu')$ be a function such that 
$$\emd(\mu,\mu') = \int_{\rp} \int_{\rp} |x-y| \psi(x,y) \mathrm dx \mathrm dy.$$ 

We decompose $\psi$ in two functions $\psi^+$ and $\psi^-$
satisfying the following conditions. For $x<y$, $\psi^+(x,y)=\psi(x,y)$ and
$\psi^-(x,y)=0$. For $x>y$, $\psi^-(x,y)=\psi(x,y)$ and $\psi^+(x,y)=0$. Last, we require that $\psi^+(x,x) = \int_{z\leq x} \psi(x,z)\mathrm dz$, and $\psi^-(x,x) =
\int_{z\geq x} \psi(x,z)\mathrm dz$. Intuitively, $\psi^+$ denotes the probability mass moved towards higher values,
and, likewise, $\psi^-$ denotes the probability mass moved
towards lower values. Let $\mu^+$ be the probability obtained after applying
$\psi^+$; namely, we define $\mu^+(y) = \int_{\rp} \psi^+(x,y) \mathrm dx$.  Note that
$$\eta \geq \emd(\mu,\mu') =\emd(\mu,\mu^+)+\emd(\mu^+,\mu').$$

In the first part of the proof (details in the appendix), we focus on the effect of $\psi^+$ and bound the expected profit of $X(\lambda)$ under the distribution $\mu^+$, as well as the expectation of the distribution $\mu^+$. The intuition here is that the expected profit of $X(\lambda)$ cannot decrease, and that the growth of the expected value of the distribution is bounded by the EMD. We show 

\begin{equation}  
\label{eq:emdemuxpMAIN}
\expmul[\mu^+]{X(\lambda)}\geq \expmul{X(\lambda)},
\end{equation}
\begin{align} 
\expmu[\mu^+] &\leq \expmu + \eta.
\label{eq:emdemupMAIN}
\end{align}

In the next part of the proof (details in the appendix), we focus on the effect of $\psi^-$. Here, the expected value of the distribution cannot increase, and the expected profit of $X(\lambda)$ decreases by an amount which can be bounded using $\eta$ and $D$. This last bound is the most technical step of the proof and uses the fact that $\min_{x\in[D,2D]}\mu(x) \leq 4\cdot \max_{x\in[D,2D]}\mu(x)$. We show

\begin{equation}
\label{eq:emdemuprimeMAIN}    
\expmu[\mu'] \leq \expmu[\mu^+],
\end{equation}
\begin{align}
\expmul[\mu']{X(\lambda)} &\geq 
\expmul[\mu^+]{X(\lambda)}-8\sqrt{{2\eta D}}.
\label{eq:emdexmuprime2MAIN}
\end{align}

To conclude the proof, note that $\expmu=2D\ln 2$
and $\expmul{X(\lambda)}=D/2$, from the proof of Theorem~\ref{thm:lower.dist.cont}. 
Combing Equations~\eqref{eq:emdemuxpMAIN}, \eqref{eq:emdemupMAIN}, \eqref{eq:emdemuprimeMAIN} and \eqref{eq:emdexmuprime2MAIN}, we arrive at the desired result, for $\eta<D/512$ (to ensure there is no division by zero). 
See Appendix for details.
\end{proof}

\section{Multiple advice} \label{sec:multiple}

In this section, we study the setting in which the prediction oracle provides a set $P$ that consists of $k$ (potential interruption) times, denoted by $\tau_0, \ldots ,\tau_{k-1}$. Recall that the consistency of a schedule is given by~\eqref{eq:cons.mult}. For every $j \in [0,k-1]$, we denote by $\delta_j \in [0,1)$, and by $i_j \in \mathbb{N}$ the unique values such that $\tau_i=2^{i_j+\delta_j}$. We assume, without loss of generality that $0\leq \delta_0 \leq \delta_1 \ldots \leq \delta_{k-1}<1$.

We first show that we can find, in time polynomial in $k$, a schedule that simultaneously minimizes the consistency and the robustness.

\begin{theorem}
Given a prediction set $P$ of size $k$, we can find a 4-robust schedule of optimal consistency in time $O(k^2)$.
\label{thm:multiple.optimal}
\end{theorem}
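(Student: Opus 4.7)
By Proposition~\ref{lem:4-robust}, any $4$-robust schedule is of the form $X(\lambda)=(2^{i-\lambda})_{i\in\mathbb{Z}}$ for some $\lambda\in[0,1)$, so the search reduces to optimizing a single scalar parameter. My plan is (i) to express, in closed form, the per-prediction ratio $\tau_j/\ell(X(\lambda),\tau_j)$ as a function of $\delta_j$ and $\lambda$, (ii) to recast the minimization of the consistency as a one-dimensional rotation problem on the unit circle, and (iii) to solve that problem by a simple sort-and-scan.

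First I would compute $\ell(X(\lambda),\tau_j)$. Since the $i$-th contract of $X(\lambda)$ completes at time $2^{i+1-\lambda}$, the largest contract completed by $\tau_j=2^{i_j+\delta_j}$ has index $i_j-1$ when $\delta_j+\lambda<1$ and $i_j$ otherwise. A short calculation then gives
\[
r_j(\lambda)\,:=\,\frac{\tau_j}{\ell(X(\lambda),\tau_j)}\,=\,2^{1+\sigma_j(\lambda)},\qquad \sigma_j(\lambda):=(\delta_j+\lambda)\bmod 1,
\]
so $r_j$ depends only on $\delta_j$ and $\lambda$, not on $i_j$. Hence
\[
c(X(\lambda),P)\,=\,2^{1+\max_j\sigma_j(\lambda)},
\]
and optimizing consistency over $4$-robust schedules is equivalent to minimizing $M(\lambda):=\max_j\sigma_j(\lambda)$ over $\lambda\in[0,1)$.

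I would then interpret $M$ geometrically. View the $\delta_j$'s as $k$ points on a circle of circumference $1$ (identifying $0$ with $1$); letting $\delta_{(0)}\leq\cdots\leq\delta_{(k-1)}$ be the sorted values, the points partition the circle into $k$ arcs of lengths $g_l:=\delta_{(l+1)}-\delta_{(l)}$ for $l<k-1$ and $g_{k-1}:=1-\delta_{(k-1)}+\delta_{(0)}$. Adding $\lambda$ modulo $1$ rotates the configuration, and $M(\lambda)$ equals $1-g$, where $g$ is the length of the arc that wraps around $0$ after the rotation. Hence $M$ is minimized by rotating so that the \emph{largest} arc wraps: writing $g^*:=\max_l g_l$, I would pick $\lambda^*\in[0,1)$ that maps the endpoint just before a largest arc to position $0$, so that $M(\lambda^*)=1-g^*$ and the optimal consistency is $2^{2-g^*}$.

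The algorithm is then immediate: sort the $\delta_j$'s, compute all $k$ arc lengths, identify the largest, and set $\lambda$ as above. Sorting dominates the running time, yielding $O(k\log k)$, which is within the claimed $O(k^2)$ bound. The main source of care in the write-up will be verifying the closed form for $r_j$ uniformly across the cases $\delta_j+\lambda<1$ and $\delta_j+\lambda\geq 1$, and correctly handling the wrap-around arc together with boundary cases (e.g.\ $\delta_{(0)}=0$ or ties among the $\delta_j$'s), so that $\lambda^*$ lies in $[0,1)$ and $M$ attains $1-g^*$ exactly at $\lambda^*$ rather than only as a one-sided limit.
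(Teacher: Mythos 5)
Your plan is correct, but it takes a different route from the paper's own proof of this theorem. The paper first shows, via a scaling/contradiction argument, that some optimal 4-robust schedule may be assumed to complete a contract exactly at some $\tau_j\in P$; this reduces the search to the $k$ ``snapped'' schedules $X_j=(2^{i+\delta_j})_{i\in\mathbb Z}$, each of which is evaluated naively in $O(k)$ time, giving the stated $O(k^2)$ bound. Only afterwards does the paper introduce the circle picture (its Lemma~\ref{lem:circle}) to bound the consistency of $X_j$ by $2^{2-D_j}$ and cut the running time to $O(k\log k)$. You instead derive a closed form $c(X(\lambda),P)=2^{1+\max_j((\delta_j+\lambda)\bmod 1)}$ valid for \emph{every} $\lambda\in[0,1)$ and optimize it directly by the rotation argument; this front-loads the content of Lemma~\ref{lem:circle}, makes the snapping step a consequence rather than a separate lemma, and immediately yields both the $O(k\log k)$ algorithm and the exact optimum $2^{2-g^*}$ (hence Corollary~\ref{cor:multiple.bounds}, since $g^*\ge 1/k$). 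Your per-prediction formula $r_j(\lambda)=2^{1+\sigma_j(\lambda)}$ checks out in both cases $\delta_j+\lambda<1$ and $\delta_j+\lambda\ge 1$. Two small imprecisions to fix in the write-up, both of which you already flagged: the identity ``$M(\lambda)=1-g$'' holds only when some rotated point lands exactly at $0$ (in general $M(\lambda)=1-g+\min_j\sigma_j(\lambda)$, which still gives $\min_\lambda M(\lambda)=1-g^*$ since the optimum does place a point at $0$); and the value of $\lambda^*$ must send the \emph{later} endpoint $\delta_{(l+1)}$ of a largest arc to position $0$ (i.e.\ $\lambda^*=(1-\delta_{(l+1)})\bmod 1$), not the earlier one, or else the wrapping arc is the wrong one.
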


\begin{proof}
We first argue that there exists a 4-robust schedule of optimal consistency $X^*=(x_i^*)_{i \in \mathbb{Z}}$, such that $X^*$ contains at least one contract whose execution is completed at time $\tau_j$, for some 
$\tau_j \in P$. By way of contradiction, suppose this is not the case. From Proposition~\ref{lem:4-robust}, the last contract completed by $X^*$ by time $t$ finishes at time $2\ell(X^*,t)$.  Define $ a=\min_{j\in [0,k-1]} \frac{\tau_j}{2\ell(X^*,\tau_j)}$, and consider the schedule  $X'=(2^{\log_2 a}\cdot x_i^*)_{i \in \mathbb{Z}}$.  From Proposition~\ref{lem:4-robust}, %it follows that $X'$ is equivalent to a schedule of the form $(2^{i-\lambda})_{i \in \mathbb{Z}}$, hence it 
$X'$ is also 4-robust. Moreover, we have that $\ell(X',\tau_j) \geq \ell(X^*,\tau)$, therefore $X'$ is also optimal, a contradiction.

% , hence has, likewise, optimal consistency.  \todo{(BS) $\ell(X',\tau_j) \geq \ell(X^*,\tau)$ is not obvious I think. It uses the fact that 4-robust schedules are powers of 2. Or the definition of $a$ should use the endpoint of the contract instead of its length.}

\newcommand{\snap}{

\tikzset{every picture/.style={line width=0.75pt}} %set default line width to 0.75pt        

\begin{tikzpicture}[x=0.75pt,y=0.75pt,yscale=-1,xscale=1]
%uncomment if require: \path (0,421); %set diagram left start at 0, and has height of 421

%Shape: Circle [id:dp4542708068909803] 
\draw   (210,174) .. controls (210,111.04) and (261.04,60) .. (324,60) .. controls (386.96,60) and (438,111.04) .. (438,174) .. controls (438,236.96) and (386.96,288) .. (324,288) .. controls (261.04,288) and (210,236.96) .. (210,174) -- cycle ;
%Shape: Circle [id:dp748031202731429] 
\draw  [fill={rgb, 255:red, 0; green, 0; blue, 0 }  ,fill opacity=1 ] (263,273.5) .. controls (263,271.57) and (264.57,270) .. (266.5,270) .. controls (268.43,270) and (270,271.57) .. (270,273.5) .. controls (270,275.43) and (268.43,277) .. (266.5,277) .. controls (264.57,277) and (263,275.43) .. (263,273.5) -- cycle ;
%Shape: Circle [id:dp6431361582156201] 
\draw  [fill={rgb, 255:red, 0; green, 0; blue, 0 }  ,fill opacity=1 ] (392,263.5) .. controls (392,261.57) and (393.57,260) .. (395.5,260) .. controls (397.43,260) and (399,261.57) .. (399,263.5) .. controls (399,265.43) and (397.43,267) .. (395.5,267) .. controls (393.57,267) and (392,265.43) .. (392,263.5) -- cycle ;
%Shape: Circle [id:dp8801392882533806] 
\draw  [fill={rgb, 255:red, 0; green, 0; blue, 0 }  ,fill opacity=1 ] (429,141.5) .. controls (429,139.57) and (430.57,138) .. (432.5,138) .. controls (434.43,138) and (436,139.57) .. (436,141.5) .. controls (436,143.43) and (434.43,145) .. (432.5,145) .. controls (430.57,145) and (429,143.43) .. (429,141.5) -- cycle ;
%Shape: Circle [id:dp03696532925901408] 
\draw  [fill={rgb, 255:red, 0; green, 0; blue, 0 }  ,fill opacity=1 ] (391,84.5) .. controls (391,82.57) and (392.57,81) .. (394.5,81) .. controls (396.43,81) and (398,82.57) .. (398,84.5) .. controls (398,86.43) and (396.43,88) .. (394.5,88) .. controls (392.57,88) and (391,86.43) .. (391,84.5) -- cycle ;
%Shape: Circle [id:dp9213037401667945] 
\draw  [fill={rgb, 255:red, 0; green, 0; blue, 0 }  ,fill opacity=1 ] (226,112.5) .. controls (226,110.57) and (227.57,109) .. (229.5,109) .. controls (231.43,109) and (233,110.57) .. (233,112.5) .. controls (233,114.43) and (231.43,116) .. (229.5,116) .. controls (227.57,116) and (226,114.43) .. (226,112.5) -- cycle ;
%Shape: Circle [id:dp4155118861402183] 
\draw  [fill={rgb, 255:red, 0; green, 0; blue, 0 }  ,fill opacity=1 ] (324,60) .. controls (324,58.07) and (325.57,56.5) .. (327.5,56.5) .. controls (329.43,56.5) and (331,58.07) .. (331,60) .. controls (331,61.93) and (329.43,63.5) .. (327.5,63.5) .. controls (325.57,63.5) and (324,61.93) .. (324,60) -- cycle ;
%Curve Lines [id:da35380362004381205] 
\draw [line width=1.5]    (327.5,54.5) .. controls (381.89,37.58) and (518.79,144.42) .. (406.22,266.66) ;
\draw [shift={(404.5,268.5)}, rotate = 313.24] [color={rgb, 255:red, 0; green, 0; blue, 0 }  ][line width=1.5]    (14.21,-4.28) .. controls (9.04,-1.82) and (4.3,-0.39) .. (0,0) .. controls (4.3,0.39) and (9.04,1.82) .. (14.21,4.28)   ;
%Curve Lines [id:da13245043358058317] 
\draw [line width=1.5]    (399.17,272.33) .. controls (351.67,311.28) and (287.92,294.28) .. (267.95,287.34) ;
\draw [shift={(265.17,286.33)}, rotate = 20.56] [color={rgb, 255:red, 0; green, 0; blue, 0 }  ][line width=1.5]    (14.21,-4.28) .. controls (9.04,-1.82) and (4.3,-0.39) .. (0,0) .. controls (4.3,0.39) and (9.04,1.82) .. (14.21,4.28)   ;

% Text Node
\draw (308,68) node [anchor=north west][inner sep=0.75pt]  [font=\LARGE] [align=left] {O};
% Text Node
\draw (361,84) node [anchor=north west][inner sep=0.75pt]  [font=\LARGE] [align=left] {$\displaystyle p_{0}$};
% Text Node
\draw (396,130) node [anchor=north west][inner sep=0.75pt]  [font=\LARGE] [align=left] {$\displaystyle p_{1}$};
% Text Node
\draw (361,222) node [anchor=north west][inner sep=0.75pt]  [font=\LARGE] [align=left] {$\displaystyle p_{j}$};
% Text Node
\draw (204,253) node [anchor=north west][inner sep=0.75pt]  [font=\LARGE] [align=left] {$\displaystyle p_{j+1}$};
% Text Node
\draw (160,85) node [anchor=north west][inner sep=0.75pt]  [font=\LARGE] [align=left] {$\displaystyle p_{k-1}$};
% Text Node
\draw (456,121) node [anchor=north west][inner sep=0.75pt]  [font=\LARGE] [align=left] {$\displaystyle \delta _{j}$};
% Text Node
\draw (323,305) node [anchor=north west][inner sep=0.75pt]  [font=\LARGE] [align=left] {$\displaystyle D_{j+1}$};
% Text Node
\draw (162,151) node [anchor=north west][inner sep=0.75pt]  [font=\LARGE] [align=left] {$\displaystyle \dotsc $};
% Text Node
\draw (384,179) node [anchor=north west][inner sep=0.75pt]  [font=\LARGE] [align=left] {$\displaystyle \dotsc $};

\end{tikzpicture}

}
\begin{figure}[t!]
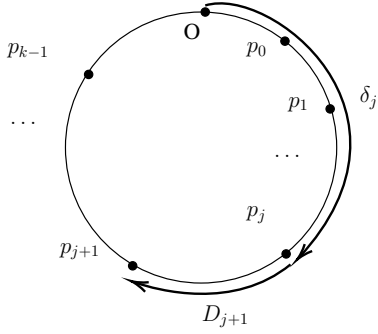

\centering
\scalebox{0.6}{\snap}
\caption{Illustration of the definitions of $\delta_j$ and 
$D_j$.}
\label{fig:snapshot}
\end{figure}

The above observation leads to the following algorithm for finding an optimal schedule. For every $j \in [0,k-1]$, define the schedule $X_j=(2^{i+\delta_j})_{i \in \mathbb{Z}}$, and note that, by definition, $X_j$ has a contract that terminates at time $\tau_j$.  Thus, among schedules in the collection $\{X_j\}_{j=0}^{k-1}$, the schedule $X_{l^*}$ with best consistency is such that
\[
l^*=\argmin_{j \in [0,k-1]} \alpha_j, \ \text{where} \ \alpha_j=\max_{i\in [0,k-1]} \frac{\tau_i}{\ell(X_j,\tau_i)}.
\] 
From Proposition~\ref{lem:4-robust}, it follows that the above algorithm yields a 4-robust schedule of optimal consistency. 
\end{proof}

\begin{figure*}[ht]
\includegraphics[width=\textwidth]{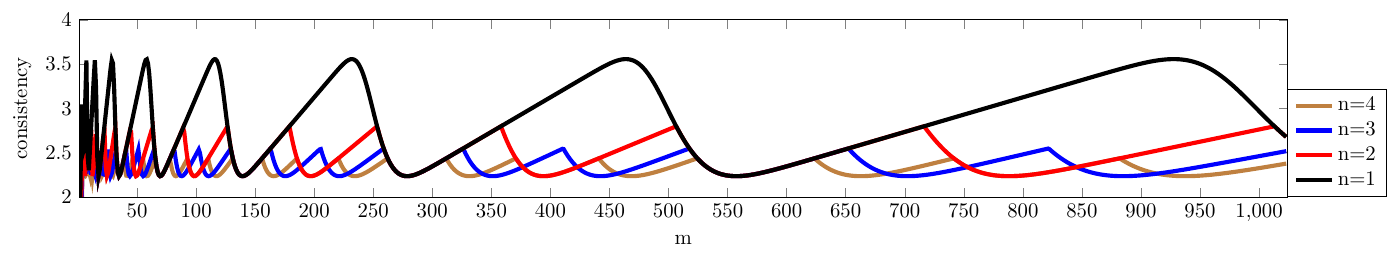}
\caption{Plot of the consistency of {\sc Sel}$_n$ with advice a truncated normal distribution as function of the mean {\tt m}.}    
\label{fig:normal.fixedv}
\end{figure*}

\begin{figure*}[ht]
\includegraphics[width=\textwidth]{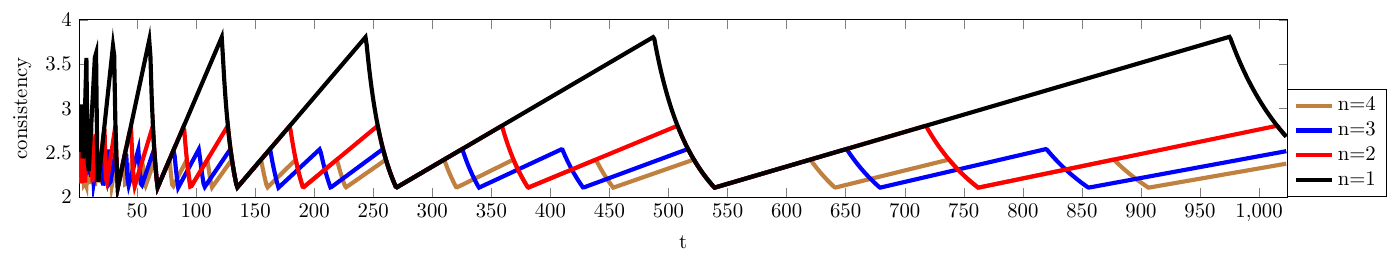}
\caption{Plot of the consistency of {\sc Sel}$_n$ with advice a uniform distribution in $[0.95t,1.05t]$, as function of $t$.}    
\label{fig:uniform}
\end{figure*}

The complexity of this algorithm is $O(k^2)$, since each $\alpha_j$ can be computed in time $O(k)$. However, we can reduce the complexity to $O(k\log k)$, using an argument that will also be useful in the proof of Corollary~\ref{cor:multiple.bounds}, and which is illustrated in Figure~\ref{fig:snapshot}. Consider the circle of unit perimeter, with an arbitrary point $O$ fixed, and let $p_0, \ldots ,p_{k-1}$ be points in the circle, such that the clockwise (arc-length) distance between $O$ and $p_j$ is $d_c(O,p_j)=\delta_j$. 
%Define $j^* \in [1,k] = \argmin_{j \in [1,k]} d(p_j, p_{(j-1) \bmod k})$. Then the schedule $(2^{i+j^*})_{i\geq 0}$ is optimal. 
Given $i\in [0,k-1]$, define $D_i$ as the clockwise distance between the consecutive points $p_{(i-1) \bmod k}$ and $p_i$
in the circle. We show the following:

\begin{lemma}
For $j\in [0,k-1]$, let $X_j$ denote the schedule $(2^{i+\delta_j})_{i \in \mathbb{Z}}$. Then $X_j$ has consistency at most $2^{2-D_j}$.
\label{lem:circle}
\end{lemma}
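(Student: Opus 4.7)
The plan is to compute $\ell(X_j,\tau_l)$ for each $\tau_l \in P$ directly, express the ratio $\tau_l/\ell(X_j,\tau_l)$ in terms of the clockwise arc-length from $p_j$ to $p_l$ on the unit circle, and then identify the worst $l$ by examining the gaps $D_i$.

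First I would compute the completion times in $X_j$: by the standard geometric-series argument for bi-infinite schedules, the $i$-th contract of $X_j$, which has length $2^{i+\delta_j}$, completes at time $\sum_{i'\leq i} 2^{i'+\delta_j} = 2^{i+1+\delta_j}$. Writing $\tau_l = 2^{i_l+\delta_l}$, the largest contract index $i$ with $2^{i+1+\delta_j} \leq \tau_l$ is the greatest integer at most $i_l+\delta_l-\delta_j-1$. A case split on the sign of $\delta_l-\delta_j$ yields: if $\delta_l \geq \delta_j$, then this index equals $i_l - 1$ and $\tau_l/\ell(X_j,\tau_l) = 2^{1+\delta_l-\delta_j}$; if $\delta_l < \delta_j$, then it equals $i_l-2$ and $\tau_l/\ell(X_j,\tau_l) = 2^{2+\delta_l-\delta_j}$.

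Next I would unify both cases via the circle picture. The clockwise arc-length from $p_j$ to $p_l$ is $\delta_l-\delta_j$ when $\delta_l \geq \delta_j$ (the direct route) and $1+\delta_l-\delta_j$ otherwise (wrapping past $O$), so in both cases $\tau_l/\ell(X_j,\tau_l) = 2^{1+\mathrm{arc}(p_j,p_l)}$, where $\mathrm{arc}(\cdot,\cdot)$ denotes this clockwise distance. Consequently $c(X_j,P) = 2^{1+\max_{l} \mathrm{arc}(p_j,p_l)}$. Finally, moving clockwise from $p_j$ the farthest point one reaches is the immediate predecessor $p_{(j-1) \bmod k}$, at arc-length $1-D_j$ (the full circumference minus the arc $D_j$ that closes the loop back to $p_j$); substituting gives $c(X_j,P) \leq 2^{2-D_j}$, as claimed.

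The only nontrivial step is recognising that the two case-analysis formulas collapse into a single clean expression through the circle abstraction, which is precisely what Figure~\ref{fig:snapshot} suggests; once that observation is made, the rest is routine arithmetic with geometric series, and I anticipate no real obstacle. Potential ties among the $\delta_l$ values require no special treatment, since the bound continues to hold (with equality achieved at $l = (j-1) \bmod k$ whenever the $\delta_l$ are distinct).
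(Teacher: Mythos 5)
Your proof is correct and takes essentially the same route as the paper's: both reduce to the observation that the binding prediction for $X_j$ is $\tau_{(j-1)\bmod k}$, whose ratio is $2^{1+a}$ with $a=1-D_j$ the clockwise arc from $p_j$ to its predecessor. The only difference is that you compute $\ell(X_j,\tau_l)$ for every $l$ and then maximize, which makes explicit the step the paper dispatches with ``from construction, it readily follows.''
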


\begin{proof}
From construction, it readily follows that the maximum consistency $c(X_j,P)$ is attained when the interruption occurs at time $\tau_{(j-1) \bmod k} \in P$.
% of $X_j$ is maximized for $\tau_{j-1} \in P$. 
We consider two cases. If $j \neq 0$, then $(j-1) \bmod k =j-1$, and by definition, we have that $\ell(X_j,\tau_{j-1})=2^{i_{j-1}-2+\delta_j}$. Therefore, 
\[
c(X_j,P) \leq \frac{2^{i_{j-1}+\delta_{j-1}}}{2^{i_{j-1}-2+\delta_j}}=2^{2-D_j}.
\]
For the second case, suppose that $j=0$, thus $(j-1) \bmod k=k-1$. Then, $\ell(X_0,\tau_{k-1})=2^{i_{k-1}+\delta_0-1}$. 
Therefore,
\[
c(X_0,P) \leq \frac{2^{i_{k-1}+\delta_{k-1}}}
{2^{i_{k-1}+\delta_0-1}} =2^{1+(\delta_{k-1}-\delta_0)}=2^{2-D_0},
\]
which concludes the proof. 
\end{proof}

Lemma~\ref{lem:circle} implies that we can improve the complexity of the algorithm of Theorem~\ref{thm:multiple.optimal}, by finding the index $j$ for which $D_{j}$ is minimized. This can be accomplished in time $O(k \log k)$ by sorting.

An interesting question is finding the exact value of the worst-case consistency of optimal schedules. The following corollary answers this question.

\begin{corollary}
The schedule of Theorem~\ref{thm:multiple.optimal} has consistency at most $2^{2-\frac{1}{k}}$, where $k$ is the size of $P$. Furthermore, this bound is tight, in that there exists a prediction $P$ such that every 4-robust schedule has consistency at least $2^{2-\frac{1}{k}}$.
\label{cor:multiple.bounds}
\end{corollary}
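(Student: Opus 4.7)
For the upper bound, the plan is to invoke Lemma~\ref{lem:circle} together with a pigeonhole argument. Each schedule $X_j$ in the collection $\{X_0, \ldots, X_{k-1}\}$ examined by the algorithm of Theorem~\ref{thm:multiple.optimal} has consistency at most $2^{2 - D_j}$, and since the $k$ arcs of length $D_j$ partition the unit circle we have $\sum_{j=0}^{k-1} D_j = 1$. Hence there is some $j^{*}$ with $D_{j^{*}} \geq 1/k$, and the schedule actually returned by the algorithm, being of minimum consistency in the collection, has consistency at most $2^{2 - D_{j^{*}}} \leq 2^{2 - 1/k}$.

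For the matching lower bound, I would exhibit a worst-case prediction and reduce the argument to a circular packing computation. Take $P = \{2^{m/k} : 0 \leq m \leq k-1\}$, so that the associated fractional exponents $\delta_m = m/k$ are equally spaced around the unit circle with every consecutive distance equal to $1/k$. By Proposition~\ref{lem:4-robust}, any 4-robust schedule is of the form $X(\lambda) = (2^{i - \lambda})_{i \in \mathbb{Z}}$ for some $\lambda \in [0,1)$; after a shift in the index, this can equivalently be written as $(2^{i + \delta})_{i \in \mathbb{Z}}$ for a unique $\delta \in [0,1)$. Placing this $\delta$ on the same unit circle, the equal spacing of the $\delta_m$'s guarantees that some prediction point $\delta_m$ lies at clockwise distance $d \leq 1/k$ from $\delta$, with $d > 0$ unless $\delta$ coincides with some $\delta_m$. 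A direct case analysis along the same lines as the proof of Lemma~\ref{lem:circle} then gives $\tau_m / \ell(X, \tau_m) = 2^{2 - d} \geq 2^{2 - 1/k}$, from which $c(X, P) \geq 2^{2 - 1/k}$.

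The main delicate point will be the degenerate case $\delta = \delta_{m_0}$ for some $m_0$: here a contract of $X$ ends exactly at $\tau_{m_0}$, so the ratio at that specific point collapses to $2$ and the identity ``ratio $= 2^{2 - d}$'' fails at $d = 0$. The fix is to use instead the adjacent prediction point $\delta_{(m_0 - 1) \bmod k}$, which lies at clockwise distance exactly $1/k$ from $\delta$ and therefore contributes a ratio of $2^{2 - 1/k}$, so the bound is preserved. Combining the two directions then shows that $2^{2 - 1/k}$ is both achievable and unavoidable, establishing tightness.
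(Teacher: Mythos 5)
Your proof follows the paper's argument exactly: the upper bound is the pigeonhole applied to Lemma~\ref{lem:circle} (the $k$ arcs $D_j$ partition the unit circle, so some $D_{j^*}\geq 1/k$ and the algorithm's output is at least as good as $X_{j^*}$), and tightness comes from the equidistant prediction $\delta_m=m/k$. You merely spell out the lower-bound case analysis that the paper's two-sentence proof leaves implicit --- correctly handling the degenerate case where a contract of $X$ ends exactly at some $\tau_m$ by falling back on the adjacent prediction point at distance exactly $1/k$ --- and your statement of the pigeonhole (largest arc at least $1/k$) is the right direction of the inequality that the paper only gestures at.
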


\begin{proof}
From Lemma~\ref{lem:circle}, since the smallest clockwise distance between two consecutive points in the unit circle cannot exceed $1/k$, we obtain that the worst-case consistency is at most $2^{2-\frac{1}{k}}$. This is tight, if the $k$ points are equidistantly distributed, i.e., if $\delta_j=j/k$, for every $j \in [0,k-1]$.
\end{proof}

Note that Corollary~\ref{cor:multiple.bounds} subsumes the known results for the extreme cases $k=1$ (for which the consistency is equal to 2~\cite{DBLP:journals/jair/AngelopoulosK23}), and $k\to \infty$ (for which the consistency reduces to the worst-case acceleration ratio~\cite{RZ.1991.composing}).

\section{Experimental evaluation}
\label{sec:experiments}

We present an experimental evaluation of our schedules, for both advice settings. We report the main findings, and we refer to the Appendix for additional results and discussion.

\subsection{Distributional advice}
\label{subsec:exp.distr}

We evaluate our algorithm of Theorem~\ref{thm:upper.dist}, to which we refer as {\sc Sel}$_n$, and recall that this algorithm selects the best schedule in class $S_n$, for some given $n \in \mathbb{N}$. We first consider, as distributional advice $\mu$, a {\em normal} distribution that is truncated at zero, with mean {\tt m}, and standard deviation $\sigma$. 

Figure~\ref{fig:normal.fixedv} depicts the experimental performance of {\sc Sel}$_n$, as function of {\tt m}, for $n \in [1,4]$, $\sigma=0.05 {\tt m}$ and ${\tt m} \in [1,1024]$. Note that for all sufficiently large {\tt m}, e.g., {\tt m}$\geq 100$, the expected interruption time is extremely close to {\tt m}.  We observe that the consistency of {\sc Sel}$_n$ improves as $n$ increases, by design of the algorithm. In particular, for $n=4$, the empirical consistency is 2.51. This value is, as expected, below the anticipated worst-case bound of Theorem~\ref{thm:upper.dist}
   (namely, $16(2^{0.25}-1)\approx 3.03$) and above the lower bound of 2, which applies to single, deterministic predictions. 

The shape of the consistency is a saw-like function of the mean, which is explained by the fact that the consistency (as the acceleration ratio) has transitions at ``critical'' times, i.e., right after the completion of a contract. As $n$ increases, we note that the transitions become more smooth. This is because the number of candidate schedules in $S_n$ becomes larger, hence the probability that the interruption is critical for some schedule in $S_n$ decreases. This also explains why the number of peaks increases as function of $n$.

% . For small $n$, and in particular $n=1$, the transitions are sharper, and as $m$ increases, the distributional prediction behaves similarly to a single prediction (since the variance is fixed), hence the findings are in accordance with Proposition~\ref{prop:single}. For small $m$ ($m\leq 4$) the results are non-representative, due to the truncation of the distribution at zero, which effectively decreases the variance. 
 
We also report results for a {\em uniform} distributional advice. Specifically, Figure~\ref{fig:uniform} depicts the consistency of {\sc Sel}$_n$ with advice chosen according to $U[0.95t, 1.05t]$, as function of $t$. For $n=4$, we observe a similar empirical consistency value as in the case of the normal distribution, namely 2.44. 
%The shape of the plots is relatively smoother, which can be explained by the fact that the random interruption is less concentrated around the mean, which in turn decreases the probability that the interruption will incur a sharp transition in the performance of the schedule. 
The shape of the plot has sharper transitions, relatively to the normal distribution, which can be explained by the fact that the latter spreads the random interruption over a larger interval than the uniform one, in this experimental setup. This also explains why the difference between the peaks and valleys is more pronounced in Figure~\ref{fig:uniform}
than in Figure~\ref{fig:normal.fixedv}.

\subsection{Multiple advice}

We evaluate our algorithm from Section~\ref{sec:multiple}, to which we refer as {\sc Mult}$_k$, where $k$ is the cardinality of the prediction set $P$. Specifically, for every $k \in [1,10]$, we generate $P$ as $k$ values chosen independently and uniformly at random in the interval $[1,1024]$. For each such prediction set, we compute the worst-case consistency, as evaluated by~\eqref{eq:cons.mult}. We compute, in addition, the average-case consistency, namely the average of the ratios $\tau/\ell({\text{\sc Mult}}_k,\tau)$, where $\tau$ is chosen uniformly at random from $P$. The latter is a much more relaxed performance measure that treats each of the $k$ predictions in $P$ as equally likely. We repeat this experiment 1000 times, and we report the average of the corresponding ratios. 

We observe that for all values of $k$, both the worst-case and the average consistency are below the upper bound of 
$2^{2-\frac{1}{k}}$, which confirms the result of Corollary~\ref{cor:multiple.bounds}. The ratios, as expected, are increasing functions of $k$, since the larger the size of $P$, the more noisy the quality of the prediction.

\begin{figure}[t!]
    \centering
    \includegraphics[width=9cm]{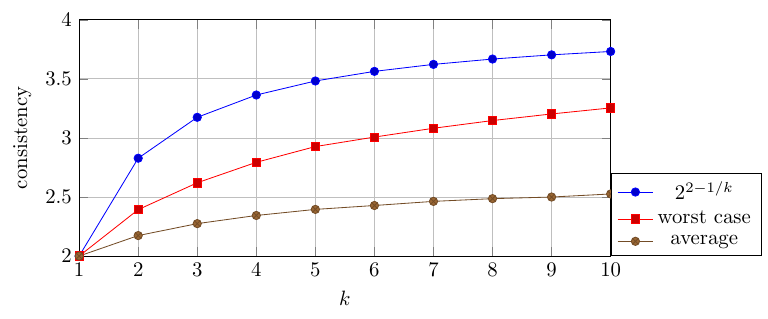}
    \caption{Experimental evaluation of {\sc MULT}$_k$.}
    \label{fig:multiple}
\end{figure}

\section{Conclusion}

We studied a classic optimization problem related to bounded-resource reasoning, namely contract scheduling, in novel learning-augmented settings that capture distributional and multiple predictions. In both settings, we gave analytically optimal schedules that simultaneously optimize the consistency and the robustness. As discussed in Section~\ref{sec:introduction}, contract scheduling has been studied in a variety of settings, including multiple instances that must be solved concurrently, in a single or multiple processors. Future work will address these more complex variants under similar learning-enhanced models. It will be also interesting to perform a multi-objective analysis, in the multiple-advice model, that addresses the average expected consistency or trade-offs between the worst-case and the average-case metrics. The techniques of Section~\ref{sec:multiple} can be applicable, by showing first tradeoffs between average and worst-case distances between points on the unit circle.

Another direction for future work is searching for a hidden target, with distributional or multiple predictions about its position in the environment. This setting has applications in robotic search and exploration, and has been studied with single, or no prediction e.g.,~\cite{DBLP:conf/aaai/EberleLMNS22,sung2019competitive}. In particular, the techniques we developed in this work will be very useful in the context of the {\em line} and {\em star} search environments~\cite{jaillet:online}, given the connections between contract scheduling and competitive search~\cite{steins,spyros:ijcai15}. Last, our work is the first to bring attention to the fact that single predictions may be fragile for certain problems, a finding that can have implications in other domains and applications such as learning-augmented online conversion problems~\cite{sun2021pareto}. 

\section{Acknowledgements}

This work was partially funded by the project PREDICTIONS, grant ANR-23-CE48-0010 and the project ALGORIDAM, grant ANR-19-CE48-0016 from the French National Research Agency (ANR), as well as by Polish National Science Centre grant 2022/45/B/ST6/00559.

\bibliographystyle{named}
\bibliography{targets-arxiv}

\clearpage
\newpage

\appendix

{\large \bf Appendix}

\section{Omitted proofs and proof details}

\noindent
\begin{proof}[Proof of Proposition~\ref{lem:4-robust}] 
For the first part of the proposition, we know from~\cite{RZ.1991.composing} that the robustness of a schedule (i.e., its acceleration ratio) is maximized for interruptions that occur infinitesimally prior to the completion time of a contract in the schedule. Thus, an equivalent expression of the robustness of a schedule $X$ is given by
\begin{equation}
r(X)=\sup_i\frac{\sum_{j=-\infty}^i x_j}{x_{i-1}}.
\label{eq:robust,equiv}
\end{equation}
It is then straightforward to confirm that any schedule $X(\lambda)$ is 4-robust, since it is a scaled variant of the doubling schedule, which is 4-robust. For the second part of the proposition, the result follows from the study of the linear recurrence inequality $\sum_{j\leq i} x_j \leq 4 x_{i-1}$, and in particular Theorem 9.4 in~\cite{searchgames}, with $M=2$. The result is described in terms of a search problem, but the same recurrence inequality applies, and hence the same conclusion.
\end{proof}

\bigskip

% Note that this result applies to schedules that are {\em bi-infinite}, in the sense that a schedule is defined as $X=(x_i)_{i=-\infty}^{+\infty}$, in which $\sum_{i=-\infty}^0 x_i$ is a constant. Informally, $X$ starts with an infinite number of infinitesimally small contracts whose total length is bounded by a small constant. This provides an alternative way of enforcing that the schedule has bounded acceleration ratio (as opposed to requiring that the interruptions appear after a unit time, or after at least a contract has completed its execution, as discussed in Section~\ref{sec:preliminaries}). 

% This assumption is mild and does not affect the correctness of our results for two reasons. First, we can simulate any schedule of the form $X=(x_i)_{i\geq 0}$ with a bi-infinite schedule, at the expense of a very small, additive constant slow-down which becomes insignificant as time elapses. Specifically, note that $\sum_{i=-\infty|}^0 \lambda 2^i =2\lambda\leq 2$. The second reason is that even schedules that are not bi-infinite, they must obey that $\lim_{i \to \infty} x_i^{1/i} =2$; this follows from Theorem 2 in~\cite{aaai06:contracts}. 

\begin{proof}[Proof of Theorem~\ref{thm:lower.dist.cont}]
From Proposition~\ref{prop:single}, let $X(\lambda)$ denote the 4-robust schedule, for some $\lambda\in[0,1)$. Consider any value $D>1$. We will define a distribution function $\mu_D$ over $[D,2D]$ such that $c(X(\lambda),\mu_D)\geq 4\ln 2$.

Specifically, let the density function of $\mu_D$ equal $f_D(x) = 2D/x^2$ on $[D,2D]$ and $0$
elsewhere. Note that $\int f_D=1$, so $\mu_D$ is indeed a distribution. We have
$$\expmu[\mu_D] = \int_{D}^{2D}\frac{2D}x dx = 2D\ln 2.$$

Let $k$ be such that $2^{k-\lambda}\in[D,2D)$.
The last completed contract of $X(\lambda)$ is of length $2^{k-\lambda-2}$ if an interruption occurs between $D$ and $2^{k-\lambda}$, and of length $ 2^{k-\lambda-1}$ if an interruption occurs between $2^{k-\lambda}$ and $2D$. So
\begin{align*}
&\expmul[\mu_D]{X(\lambda)} \\&= 2^{k-2-\lambda}\int_{D}^{2^{k-\lambda}} \frac{2D}{x^2} dx + 2^{k-1-\lambda}\int_{2^{k-\lambda}}^{2D} \frac{2D}{x^2} dx\\
&= 2^{k-2-\lambda}\left(2 - D\cdot2^{1-k+\lambda} + 2 (2D\cdot2^{\lambda-k}-1)\right)\\
&= 2^{k-2-\lambda} (D\cdot 2^{1-k+\lambda}) \\
&= D/2.
\end{align*}

Therefore, we have $c(X(\lambda),\mu_D) = 4\ln2$.
\end{proof}

\bigskip

\begin{proof}[Proof of Proposition~\ref{prop:single}]
Let $X$ denote a 4-robust, 2-consistent schedule $X$. Then for any given prediction $\tau$, from~\cite{DBLP:journals/jair/AngelopoulosK23} we know that 
$X$ must complete a contract $C$ of length $\tau/2$ at time $\tau$, as $\tau \to \infty$. Consider now the actual interruption $T=\tau-\epsilon$, i.e., the interruption occurs right before $C$ terminates. Then
\[
\ell(X,T)=\frac{\tau}{4} =\frac{T+\epsilon}{4}.
\]
\end{proof}

\bigskip

\begin{proof}[Omitted details of the proof of Theorem~\ref{lem:smoothness}]
    
In the first part of the proof, we will focus on the effect of $\psi^+$. By construction of $\psi^+$, we have: 
$$\emd(\mu,\mu^+)=\int_{x\in\rp} \int_{y>x} (y-x) \psi(x,y) \mathrm dy
\mathrm dx.$$ 
For any $x$, we have $\mathbb P_{z\sim{\mu^+}}(x>z) \geq \mathbb P_{z\sim{\mu}}(x>z)$, therefore we obtain Equation~\eqref{eq:emdemuxpMAIN} from the main paper:

\begin{equation}  
%\label{eq:emdemuxp}
\tag{\ref{eq:emdemuxpMAIN}}
\expmul[\mu^+]{X(\lambda)}\geq \expmul{X(\lambda)}.
\end{equation}

Moreover,
$$\expmu = \int_{\rp} x \cdot \mu(x)\mathrm dx = \int_{\rp}\int_{\rp} x\cdot \psi^+(x,y) \mathrm dy \mathrm dx,$$ 
hence, we obtain Equation~\eqref{eq:emdemupMAIN} from the main paper:
\begin{align} \nonumber
\expmu[\mu^+] &= \int_{\rp}\int_{\rp} y\cdot \psi^+(x,y) \mathrm dx \mathrm dy\\ 
\nonumber
&= \expmu + \emd(\mu,\mu^+)\\
&\leq \expmu + \eta.
%\label{eq:emdemup}
\tag{\ref{eq:emdemupMAIN}}
\end{align}

In the next part of the proof, we focus on the effect of $\psi^-$. By similar arguments, we can see that, as 
$\mathbb P_{z\sim{\mu'}}(x>z) \leq \mathbb P_{z\sim{\mu^+}}(x>z)$, we derive Equation~\eqref{eq:emdemuprimeMAIN} from the main paper:
\begin{equation}
%\label{eq:emdemuprime}    
\tag{\ref{eq:emdemuprimeMAIN}}
\expmu[\mu']\leq \expmu[\mu^+].
\end{equation}

We now give a lower bound on
$\expmul[\mu']{X(\lambda)}$. We have
$$\eta\geq  \emd(\mu^+,\mu')=\int_{x\in\rp} \int_{y<x} (x-y) \psi(x,y) \mathrm dy \mathrm dx.$$ 

Let $k$ such that $2^{k-\lambda}\in[D,2D)$. As $\mu^+(x)=0$ for any $x<D$, 
the last completed contract of $X(\lambda)$ is of length $D$ if the interruption is in the interval $[2^{k-\lambda},2D]$
and of length $D/2$ if the interruption is in the interval $(0,2^{k-\lambda}]$.
We now give a lower bound on $\expmul[\mu']{X(\lambda)}$ based on $\expmul[\mu^+]{X(\lambda)}$. 
For that, we upper bound, by $D$, the loss in the potential profit due to probability mass crossing
the end of a contract of $X(\lambda)$ because of $\psi^-$. Indeed, any contract 
completed by $X(\lambda)$ before time $2D$ has length at most $D$.

\begin{align} \nonumber
\expmul[\mu']{X(\lambda)} &\geq \expmul[\mu^+]{X(\lambda)}\\ \nonumber
-& D\cdot\int_{y=0}^{2^{k-1-\lambda}} \int_{x=D}^{2^{k-\lambda}}  \psi^-(x,y)\mathrm dx \mathrm dy \\
-& D\cdot\int_{y=0}^{2^{k-\lambda}} \int_{x=2^{k-\lambda}}^{2D} \psi^-(x,y)\mathrm dx \mathrm dy.
\label{eq:emdexmuprime}
\end{align} 

Fixing the value of $\emd(\mu^+,\mu') $, the RHS of~\eqref{eq:emdexmuprime} is minimized if $\psi^-(x,2^{k-1-\lambda}) = \mu(x)$ for
$x\in[D,D+\delta_1]$ and $\psi^-(x,2^{k-\lambda}) =
\mu(x)$ for $x\in[2^{k-\lambda},2^{k-\lambda}+\delta_2]$, for some
$\delta_1,~\delta_2$, and $\psi^-(x,y)=0$ for any other $x\neq y$. 
We now upper bound $\delta_1$ (resp. $\delta_2$), using $\eta\geq\emd(\mu^+,\mu')$. 
Since $\mu(x)\geq\mu(2D)= 1/(2D)$, we have:
\begin{align}
\nonumber
\eta &\geq \int_{D}^{D+\delta_1} |x-2^{k-1-\lambda}| \cdot\psi^-(x,2^{k-1-\lambda}) ~ \mathrm dx\\\nonumber
&\geq \int_{D+\delta_1/2}^{D+\delta_1} \frac{\delta_1}2 \cdot \mu(2D) ~ \mathrm dx  \\\nonumber
&\geq \frac{\delta_1}2 \cdot\frac{\delta_1}2 \cdot \frac 1{2D}\\
\delta_1 &\leq \sqrt{8\eta D}.
\label{eq:emddelta}
\end{align}
We have the same result for $\delta_2$, so both $\delta_1$ and $\delta_2$ are
smaller than $\sqrt{8\eta\cdot D}$.

We now upper bound the decrease in $\expmul[\mu']{X(\lambda)}$ that follows from 
Equation~\eqref{eq:emddelta}. We can bound the integral:
\begin{align}\nonumber 
\int_{2^{k-1-\lambda}}^{2^{k-1-\lambda}+\delta_1}\mu(x) &\leq \delta_1 \cdot \mu(D) \\ \nonumber
&\leq  \frac 2{D} \cdot \sqrt{8\eta\cdot D} \\
&\leq 4\sqrt{\frac{2\eta}{D}}, \nonumber 
\end{align}
and equivalently for $\delta_2$. Therefore, using Equation~\eqref{eq:emdexmuprime},  we obtain Equation~\eqref{eq:emdexmuprime2MAIN} from the main paper:
\begin{align}\nonumber
\expmul[\mu']{X(\lambda)} &\geq 
\expmul[\mu^+]{X(\lambda)} - 2D\cdot 4\sqrt{\frac{2\eta}{D}}\\ 
&\geq 
\expmul[\mu^+]{X(\lambda)}-8\sqrt{{2\eta D}}.
%\label{eq:emdexmuprime2}
\tag{\ref{eq:emdexmuprime2MAIN}}
\end{align}

To conclude the proof, note that $\expmu=2D\ln 2$
and $\expmul{X(\lambda)}=D/2$, from the proof of Theorem~\ref{thm:lower.dist.cont}. 
Combing Equations~\eqref{eq:emdemuxpMAIN}, \eqref{eq:emdemupMAIN}, \eqref{eq:emdemuprimeMAIN} and \eqref{eq:emdexmuprime2MAIN}, we obtain for  $\eta<D/512$ (to ensure there is no division by zero):

\begin{align*}
\frac{\expmu[\mu']}{\expmul[\mu']{X(\lambda)}} 
&\leq \frac{\expmu[\mu^+]}{\expmul[\mu^+]{X(\lambda)}-8\sqrt{2\eta D}}\\
&\leq \frac{\expmu[\mu]+\eta}{\expmul{X(\lambda)}-8\sqrt{2\eta D}}\\
&\leq \frac{2D\ln 2 + \eta}{\frac D2-8\sqrt{2\eta D}}\\
&\leq \frac{4\ln 2 + 2\eta/D}{1-16\sqrt{2\eta /D}}.
\end{align*} 

Finally, for small values of $\eta/D$, we have:
$$\frac{\expmu[\mu']}{\expmul[\mu']{X(\lambda)}} \leq 4\ln 2 + O(\sqrt{\eta/D}),$$
which concludes the proof.
\end{proof}

%\input{search-appendix}

% can be moved on the previous page by moving the code before \input{smoothness-appendix}
% \begin{figure*}[th]
% \includegraphics[width=\textwidth]{fig_consistency_normal_prop_var.pdf}
% \caption{Plot of the consistency of {\sc Sel}$_n$ with advice a normal distribution, as function of its mean. The variance has linear dependency in the mean.}    
% \label{fig:normal.vardv}
% \end{figure*}

\section{Additional experimental results}

We report additional experimental results concerning the distributional advice model. 

\medskip

\subsection{Time horizon}
We repeated the experiments of Section~\ref{subsec:exp.distr}
for larger time horizons, namely $[1,20000]$, shown in Figure~\ref{fig:time.normal} and Figure~\ref{fig:time.uniform}. The plots have the same overall shape as those reported in the main paper, which is expected, considering that the schedule follows a scaled doubling strategy.

\medskip
\subsection{Fixed variance} 
\label{app:fixed}
We considered again the setting of Section~\ref{subsec:exp.distr}, with the difference that the variance of the distributional advice is now fixed, instead of being a function of time (or the mean {\tt m}). This captures the situation in which the window of time in which we expect the interruption to occur is prespecified. Specifically, Figure~\ref{fig:normal.fixed.appendix} depicts the consistency of {\sc Sel}$_n$ given a normal distributional advice of mean {\tt m}, and standard deviation $\sigma=10$.

We observe that as the time increases, the transitions of the plot become sharper, relatively to the setting in which the variance is fixed. This is because, as the time increases, the variance becomes less significant in relation to the mean of the distributional advice. This implies, in turn, that the advice has decreasing spread as function of time, and becomes more and more close to a single-valued prediction for large $t$. This also explains why for the same value of $n$, the consistency of {\sc Sel}$_n$ is worse in this setting than in the setting of Figure~\ref{fig:normal.fixedv}, in the main paper.
The values reported for very small values of {\tt m} are outlier, and are due to the high variance (relative to the small values of {\tt m}).

\begin{figure*}[t!]
\includegraphics[width=\textwidth]{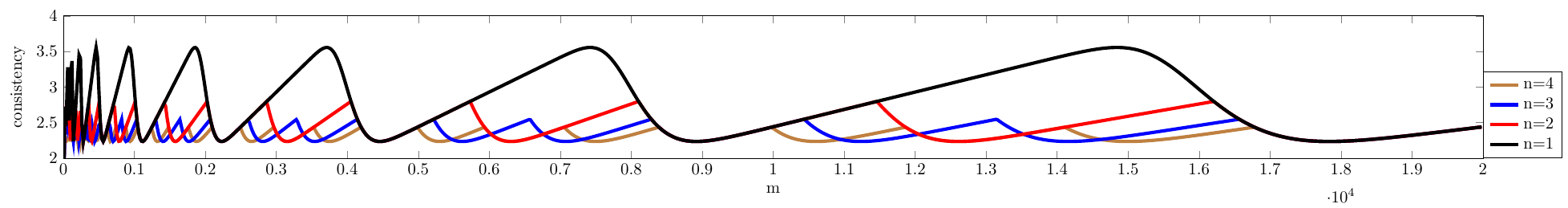}
\caption{Plot of the consistency of {\sc Sel}$_n$ with time horizon $[1,20000]$, for normal distributional advice as in the main paper.}    
\label{fig:time.normal}
\end{figure*}

\begin{figure*}[t!]
\includegraphics[width=\textwidth]{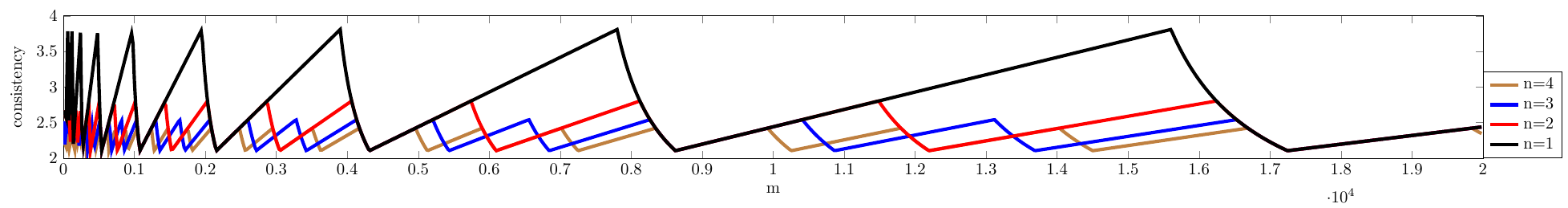}
\caption{Plot of the consistency of {\sc Sel}$_n$ with time horizon $[1,20000]$, for uniform distributional advice as in the main paper.}    
\label{fig:time.uniform}
\end{figure*}

\begin{figure*}[t!]
\includegraphics[width=\textwidth]{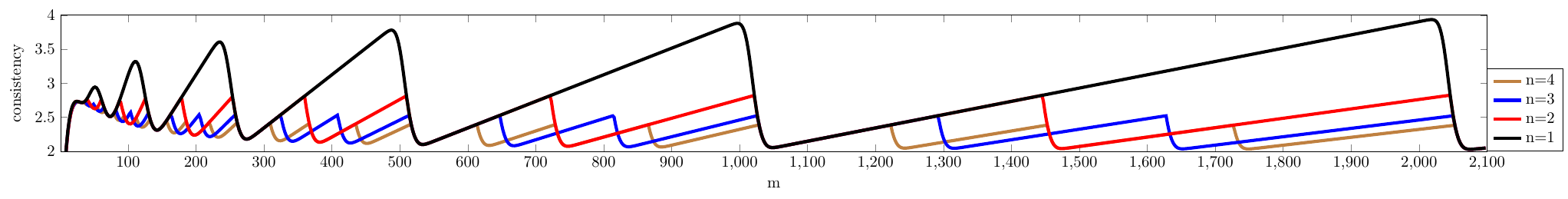}
\caption{Plot of the consistency of {\sc Sel}$_n$  for normal distributional advice with mean {\tt m} and fixed standard deviation $\sigma=10$.}    
\label{fig:normal.fixed.appendix}
\end{figure*}

\begin{figure*}[t!]
\includegraphics[width=\textwidth]{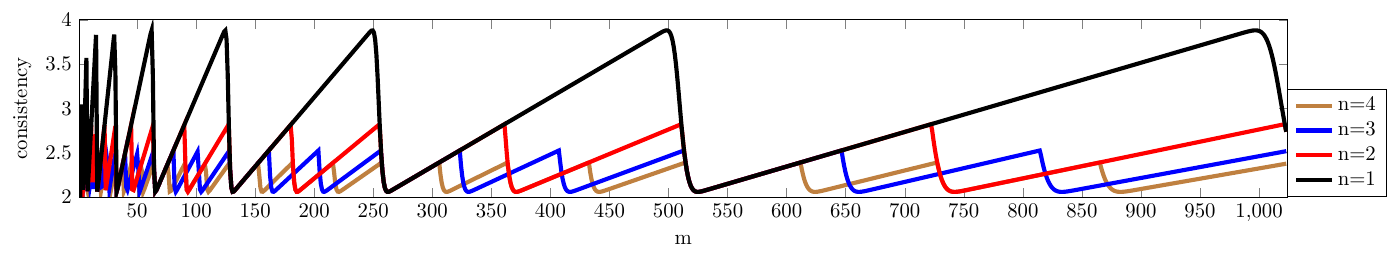}
\caption{Plot of the consistency of {\sc Sel}$_n$  for normal distributional advice with mean {\tt m} and standard deviation $\sigma=0.01 {\tt m}$.}    
\label{fig:normal.variable.001}
\end{figure*}

\medskip
\subsection{Time-dependent variance} 

We did experiments with a distributional advice whose mean is variable, and function of time, and whose variance is larger or smaller than the values of the main paper. Figure~\ref{fig:normal.variable.001} and Figure~\ref{fig:normal.variable.02} depict the experimental consistency with normal distributional advice for a setting as in the main paper, and standard deviation $\sigma=0.01 {\tt m}$ and $\sigma=0.2 {\tt m}$, respectively. For small $\sigma$, we observe sharper transitions, and a larger gap between the peaks and valleys, which can be explained along the same lines as in the discussion of Section~\ref{app:fixed}. For large $\sigma$, we observe that the consistency increases to values close to 2.7, and that the plots become relatively ``flat'', with much smoother transitions. This is because, with large variance, the setting becomes highly random, and the prediction less crucial or useful, which also explains the much smaller gap between the various values of $n$. 

Similar observations and conclusions can be drawn for uniform advice, as depicted in Figure~\ref{fig:uniform.variable.01}.

\medskip
\subsection{Error analysis}

We conducted experiments to test the resilience of the algorithm {\sc Sel}$_n$ to errors in the distributional advice. In particular, we considered the setting in which the advice is a normal distribution with mean $m$ and standard deviation fixed to $\sigma=25$, whereas the interruption time is drawn according to a normal distribution with {\em actual}, but unknown mean $m'$ (in general, $m'\neq m$) and standard deviation equal to 25. We choose $n=16$.

In Figure~\ref{fig:error.500.small}, the blue curve depicts the performance of our schedule as function of the actual mean $m'$ (and thus, as function of the error, defined as $\eta=|m'-m|$). 
In contrast, the red curve depicts the consistency of the schedule with error-free advice. We observe that as long as the error is relatively small, the schedule remains fairly robust to prediction errors, and shows smooth degradation. We also observe an asymmetry on the performance, depending on whether the error is ``positive'', i.e., $m'>m$ or ``negative'', i.e., $m'<m$). This is expected, since a positive error does not have as critical an effect as negative error: positive error implies that a large contract in the schedule will still likely terminate, whereas negative error implies that the schedule is running a large contract which likely will {\em not} terminate by the interruption. 
%This parallels a situation that also been observed in single, deterministic predictions~\cite{DBLP:journals/jair/AngelopoulosK23}.

Figure~\ref{fig:error.500.large} depicts the same plots for much larger values of error. We note that the performance plot exhibits, informally, a certain periodicity, which can be explained by the nature of the problem. Namely, even if the error becomes large, it is possible that the advice may have a beneficial effect to the schedule, if the latter happens to complete a large contract close to the actual, expected interruption time. This explains why the blue and red curves meet not only for $m'=m$, but also for other values of $m'$. This finding is also an unavoidable situation: {\em any} schedule will perform very well for some outlier values of large error, which implies that it is impossible to find a schedule that strictly dominates all other schedules across any error domain. 

Figure~\ref{fig:error.700.large} depicts the same plot for normal distributional advice with $m=700$. We observe similar performance and findings as discussed above.

\begin{figure*}[t!]
\includegraphics[width=\textwidth]{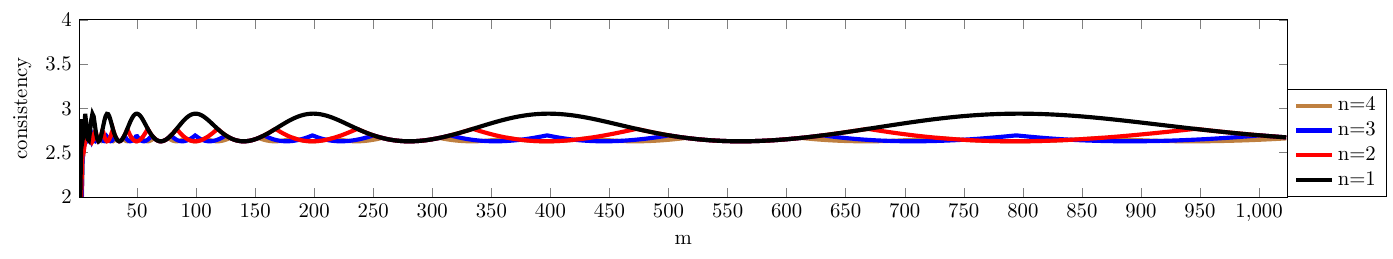}
\caption{Plot of the consistency of {\sc Sel}$_n$  for normal distributional advice with mean {\tt m} and standard deviation $\sigma=0.2 {\tt m}$.}    
\label{fig:normal.variable.02}
\end{figure*}

\begin{figure*}[t!]
\includegraphics[width=\textwidth]{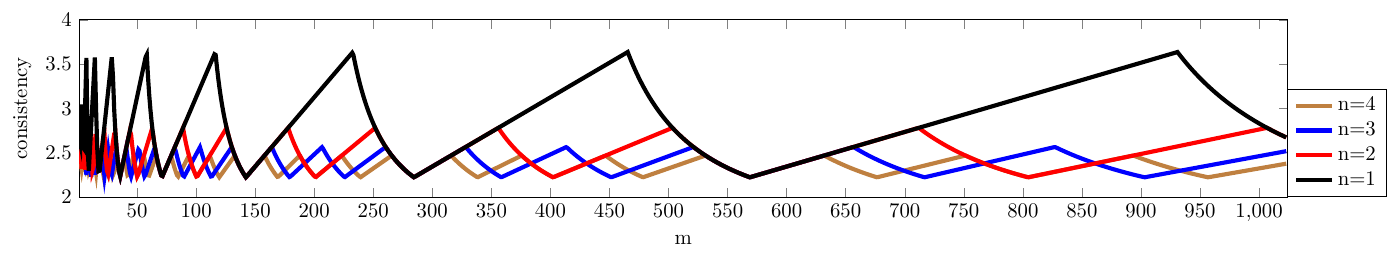}
\caption{Plot of the consistency of {\sc Sel}$_n$  for uniform distributional advice $U[0.9t,1.1t]$, as function of time $t$.}
\label{fig:uniform.variable.01}
\end{figure*}

\begin{figure*}[t!]
\includegraphics[width=\textwidth]{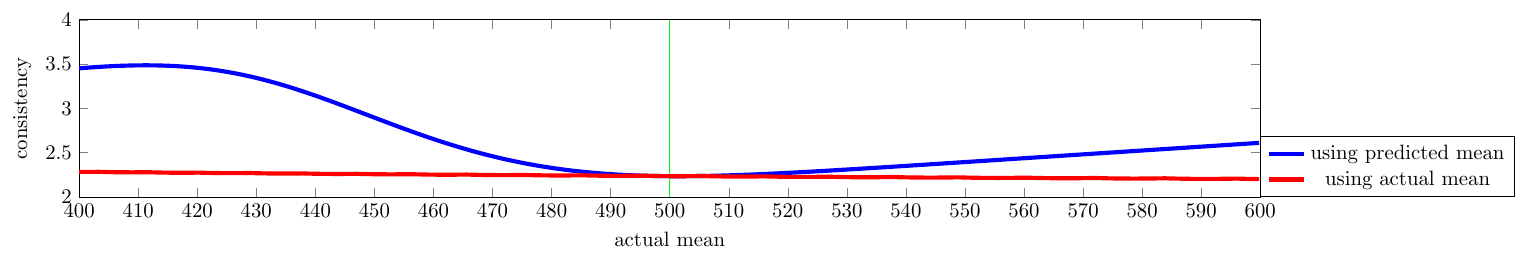}
\caption{Plot of the consistency of {\sc Sel}$_{16}$  
with normal distributional advice of mean $m=500$,
as function of the actual mean $m'$.}
\label{fig:error.500.small}
\end{figure*}

\begin{figure*}[t!]
\includegraphics[width=\textwidth]{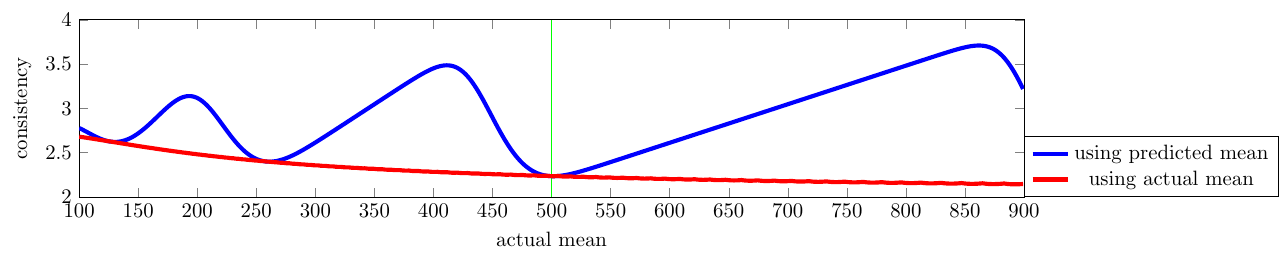}
\caption{Plot of the consistency of {\sc Sel}$_{16}$  
with normal distributional advice of mean $m=500$,
as function of the actual mean $m'$ (larger range of error).}
\label{fig:error.500.large}
\end{figure*}

\begin{figure*}[t!]
\includegraphics[width=\textwidth]{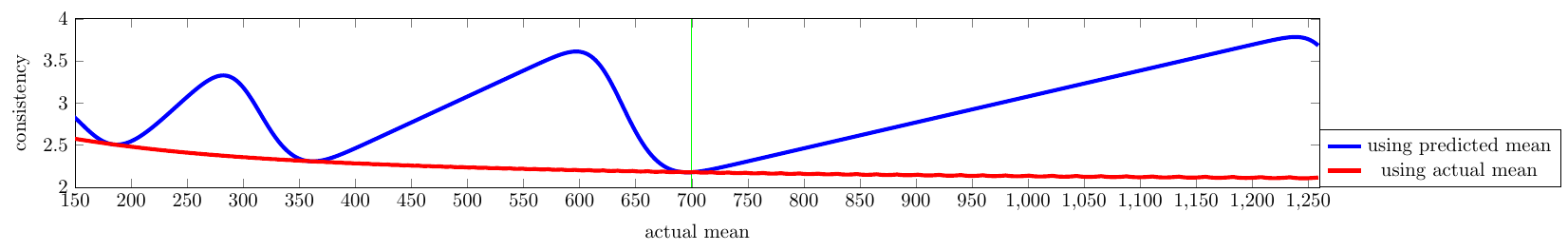}
\caption{Plot of the consistency of {\sc Sel}$_{16}$  
with normal distributional advice of mean $m=700$,
as function of the actual mean $m'$ (larger range of error).}
\label{fig:error.700.large}
\end{figure*}

\end{document}